\documentclass[11pt]{article}

\usepackage[colorlinks,pagebackref]{hyperref}
\usepackage{amsmath} 
\usepackage{amsthm} 
\usepackage{amssymb}	
\usepackage{graphicx} 
\usepackage{multicol} 
\usepackage{multirow}
\usepackage{color}
\usepackage[dvips,letterpaper,margin=1in,bottom=1in]{geometry}
\usepackage[capitalize,noabbrev]{cleveref}

\usepackage[utf8]{inputenc}
\usepackage[english]{babel}

\usepackage{mathtools}

\newtheorem{theorem}{Theorem}[section]

\newtheorem{lemma}[theorem]{Lemma}
\newtheorem{corollary}[theorem]{Corollary}


\DeclarePairedDelimiter\rbra{\lparen}{\rparen}
\DeclarePairedDelimiter\sbra{\lbrack}{\rbrack}
\DeclarePairedDelimiter\cbra{\{}{\}}

\DeclarePairedDelimiter\Abs{\lVert}{\rVert}

\DeclarePairedDelimiter\floor{\lfloor}{\rfloor}
\DeclarePairedDelimiter\ket{\lvert}{\rangle}
\DeclarePairedDelimiter\bra{\langle}{\rvert}
\DeclarePairedDelimiter\ave{\langle}{\rangle}

\newcommand{\substr}[2] {\sbra*{#1 .. #2}}

\DeclareMathOperator*{\argmin}{arg\,min}

\newcommand{\set}[2] {\left\{\, #1 \colon #2 \,\right\}}

\newcommand{\polylog} {\operatorname{polylog}}

\usepackage{algorithm}
\usepackage{algpseudocode}


\usepackage{tabularx}
\usepackage{booktabs}
\usepackage{threeparttable}

\usepackage{adjustbox}

\newcommand{\footremember}[2]{%
    \footnote{#2}
    \newcounter{#1}
    \setcounter{#1}{\value{footnote}}%
}

\usepackage{tikz}
\usetikzlibrary{quantikz2}



\begin{document}

\title{Quantum Data Structure for Range Minimum Query}

\author{
    Qisheng Wang \footremember{1}{Qisheng Wang is with the School of Informatics, University of Edinburgh, United Kingdom (e-mail: \url{QishengWang1994@gmail.com}).}
    \and
    Zhean Xu \footremember{2}{Zhean Xu is with the Department of Computer Science and Technology, Tsinghua University, China (e-mail: \url{xuzhean@icloud.com}).}
    \and Zhicheng Zhang \footremember{3}{Zhicheng Zhang is with the Centre for Quantum Software and Information, University of Technology Sydney, Australia (e-mail:\url{iszczhang@gmail.com}).}
}
\date{}

\maketitle

\begin{abstract}
Given an array $a\substr{1}{n}$, the Range Minimum Query (RMQ) problem is to maintain a data structure that supports RMQ queries: given a range $\sbra{l, r}$, find the index of the minimum element among $a\substr{l}{r}$, i.e., $\argmin_{i \in \sbra{l, r}} a\sbra{i}$.
In this paper, we propose a quantum data structure that supports RMQ queries and range updates, with an \textit{optimal} time complexity $\widetilde \Theta\rbra{\sqrt{nq}}$ for performing $q = O\rbra{n}$ operations \textit{without} preprocessing, compared to the classical $\widetilde\Theta\rbra{n+q}$.\footnote{$\widetilde \Theta\rbra{\cdot}$ suppresses logarithmic factors.}
As an application, we obtain a time-efficient quantum algorithm for $k$-minimum finding \textit{without} the use of \textit{quantum random access memory}. 
\end{abstract}

\textbf{Keywords: quantum computing, quantum algorithms, quantum data structures, range minimum query.}

\newpage

\tableofcontents
\newpage

    \section{Introduction}

    Range Minimum Query (RMQ) is a basic problem, where we are tasked with finding (the index of) the minimum element in any subarray $a\substr{l}{r}$ of a given array $a\substr{1}{n}$ with $n$ elements from a totally ordered set. Formally, the task is to find
    \[
    \operatorname{RMQ}\rbra{a, l, r} = \argmin_{l \leq i \leq r} a\sbra{i}.
    \]
    In case of a tie, we define the answer as the smallest possible index.
    The RMQ problem, as a special case of orthogonal range queries \cite{Lue78,Wil85}, was initially studied in \cite{HT84} and \cite{GBT84} for solving graph and geometric problems, while also originated from another research line of parallel computing \cite{Val75,SV81,SV88,BV93}.
    Data structures for RMQ have a wide range of applications in text processing \cite{ALV92,Mut02,AKO04,FH06,Sad07,Sad07b,VM07,CPS08,FHS08,FMN08,HSV09,CIK+12} and graph problems \cite{RV88,GT04,BFCP+05,LC08}. 

    In this paper, we consider a general case of RMQ, where we are tasked with maintaining an online data structure that supports RMQ queries under updating the elements in the input array $a\substr{1}{n}$. 
    It is folklore that $q$ operations for RMQ can be performed in $O\rbra{n + q\log n}$ time, using textbook data structures for, e.g., range queries \cite{Lue78,BM80,Wil85} and dynamic trees \cite{ST83}.
    Recently, the time complexity was improved in \cite{NS14,BDR11} to $O\rbra{n + q \log n / \log \log n}$,\footnote{Here, the time complexity is measured in the word RAM model \cite{FW90}. The word RAM model is a common assumption for classical algorithms, where all (classical) data are stored in the memory as an array of $\Theta\rbra{\log n}$-bit words with $n$ the problem size. In our case, $n$ is the length of the input array $a\substr{1}{n}$.}
    which is optimal (as noted in \cite{Dav11}) because it was shown in \cite{AHR98} that if each update operation takes $\polylog\rbra{n}$ time, then each RMQ query requires $\Omega\rbra{\log n/\log \log n}$ time.\footnote{Let $t_{u}$ and $t_q$ be the time complexity (in the word RAM model) of each update operation and each RMQ query, respectively. Then, the worst-case time complexity of the RMQ data structure is $T = \Theta\rbra{n + q\rbra{t_u+t_q}}$ if its preprocessing time is $O\rbra{n}$. If $t_u = \Omega\rbra{\log n/\log \log n}$, then $T = \Omega\rbra{n + q \log n/\log \log n}$; if $t_u = O\rbra{\log n/\log \log n} = \polylog\rbra{n}$, then $t_q = \Omega\rbra{\log n/\log \log n}$ due to the lower bound given by \cite{AHR98}, and thus $T = \Omega\rbra{n + q\log n/\log \log n}$. Therefore, it always holds that $T = \Omega\rbra{n + q\log n/\log \log n}$, which concludes the optimality.}
    Further results include improving the space complexity \cite{HLMS16} and developing algorithms in the \textit{external memory} \cite{AV88} model \cite{AFSS13}. 

    \textbf{Motivation.}
    Quantum speedups are known for a wide range of computational problems, e.g., factoring \cite{Sho97}, search \cite{Gro96}, graph problems \cite{DHHM06}, and solving systems of linear equations \cite{HHL09}.
    These problems have been extensively studied in both classical and quantum settings.
    However, data structures, which serve as a cornerstone in designing efficient algorithms in the classical literature, remain not well-understood in quantum computing. 
    In this paper, we explore the quantum advantages in the RMQ problem and give a positive answer to the following question.
    \[
    \textit{Can quantum computing bring speedups to the RMQ problem?}
    \]

    \subsection{Main results}

    We study the quantum complexity of the RMQ problem. 
    We assume quantum query access to the input array $a\substr{1}{n}$, namely, a quantum unitary oracle $\mathcal{O}_a$
    that returns $a[i]$ given the index $i\in [n]$ in the following way:\footnote{We denote $\sbra{n} = \cbra{1, 2, \dots, n}$.}
    \[
    \mathcal{O}_a \ket{i, j, z} = \ket{i, j \oplus a\sbra{i}, z}
    \]
    for all $i,j,z$.
    The query complexity of a quantum algorithm for the RMQ problem is defined as the number of queries to the oracle $\mathcal{O}_a$, while the time complexity is defined as the sum of the query complexity, the number of elementary quantum gates, and the time complexity of classical operations. 
    In addition, we assume that the elements in $a\substr{1}{n}$ are from a totally ordered set $\rbra{A, \leq}$, with the operator $\min \colon A \times A \to A$ induced by ``$\leq$''. 
    For example, a common case of interest is $A = \mathbb{N} \cup \cbra{+\infty}$ with the natural definition of $\min\rbra{\cdot, \cdot}$.\footnote{Here, $+\infty$ is an element that satisfies $\min\rbra{x,+\infty}=\min\rbra{+\infty,x}=x$ for any $x\in \mathbb{N}$.}

    In this paper, we give tight (up to logarithmic factors) upper and lower bounds for the RMQ problem. 
    In the following, we use query complexity to measure the efficiency of our quantum algorithms, while they are also time-efficient with their time complexities only incurring an extra logarithmic factor. 
    It is worth noting that our quantum algorithm does not require Quantum Random Access Memory (QRAM) \cite{GLM08}, which is often needed to implement time-efficient quantum algorithms, e.g., \cite{vAGGdW20,AdW22}. 
    Due to the limitations of the current generation of quantum devices, it could be expensive to build a QRAM in practice regarding the required physical resources (see the discussions in, e.g., \cite{CHI+18}). 
    In contrast, we consider hybrid quantum-classical computation (cf. \cite{CC22}), and the time complexity of our quantum algorithms \textit{without} QRAM is measured by the sum of the query complexity, gate complexity (including quantum measurements), and the cost of classical computation in the word RAM model (see \cref{sec:def-quantum-algo} for the formal definition). 
    See implications in \cref{sec:imply} for the feature of not using QRAM.

    In the RMQ problem, it is asked to support both RMQ queries and range modifications in the online setting. 
    To formally define the operations for range modifications, let $\rbra{F \subseteq A^A, \circ}$ be a monoid with ``$\circ$'' the function composition and the identity element $\mathsf{id} \colon x \mapsto x$ for all $x \in A$, such that $f\rbra{\min\rbra{x, y}} = \min\rbra{f\rbra{x}, f\rbra{y}}$ for all $f \in F$ and $x, y \in A$. 
    The RMQ problem asks to maintain the following operations:
    \begin{itemize}
        \item $\mathsf{Initialize}\rbra{\mathcal{O}_a}$: Initialize with the array $a\substr{1}{n}$, with elements given by a quantum oracle $\mathcal{O}_a$.
        \item $\mathsf{Query}\rbra{l, r}$: Find the index of the minimum element among $a\sbra{l}, \dots, a\sbra{r}$, i.e., $\operatorname{RMQ}\rbra{a, l, r}$. 
        \item $\mathsf{Modify}\rbra{l, r, f}$: Set $a\sbra{i} \gets f\rbra{a\sbra{i}}$ for each $l \leq i \leq r$, where $f \in F$.
    \end{itemize}
    Here, $\mathsf{Initialize}\rbra{\mathcal{O}_a}$ is the preprocessing procedure which is called only once at the very beginning of the task. 
    We give several examples of $F$ that are of common interest.
    For the integer case with $A = \mathbb{N} \cup \cbra{+\infty}$, the set $F = \set{\mathsf{add}_c\rbra{x} = x+c}{c \in A}$ is valid, where $\mathsf{add}_c$ adds $c$ to the element; also, $F = \set{\mathsf{assign}_c\rbra{x} = c}{c \in A}$ is valid, where $\mathsf{assign}_c$ sets the element to $c$. 
    
    It is folklore that $\widetilde \Theta\rbra{n+q}$ time is necessary and sufficient to classically support $q$ operations for RMQ. 
    The classical lower bound follows from a simple observation that each element should be looked up at least once. 
    In sharp contrast, we show that this barrier, however, can be broken in the quantum world. 

    \begin{theorem} [Upper bound for RMQ, \cref{thm:QDynamicRMQ}] \label{thm:QDynamicRMQ-intro}
        There is a quantum algorithm for RMQ that supports $q \leq n$ operations on an array of length $n$ with query complexity $\widetilde O\rbra{\sqrt{nq}}$. 
    \end{theorem}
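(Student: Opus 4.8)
The plan is to maintain a segment tree over $[1,n]$ that is built \emph{lazily}: at $\mathsf{Initialize}(\mathcal{O}_a)$ we materialize nothing (so initialization costs $O(1)$ and there is no $O(n)$ preprocessing), and a tree node is only ever materialized on demand. Crucially, each node $v$ stores only the current \emph{minimum value} on its range together with a lazy tag $g_v\in F$, and when a node $v$ covering an interval of size $s_v$ is first needed, its value is obtained by the quantum minimum-finding algorithm applied to the \emph{original} array through $\mathcal{O}_a$, at cost $O(\sqrt{s_v})$. The key point making this legitimate is that any modification that reached a still-unmaterialized node $v$ must have covered $v$ entirely (a partially-covering modification would have forced $v$'s children to be materialized during descent); hence every update seen by $v$ acts uniformly on its range, and by the homomorphism property $f(\min(x,y))=\min(f(x),f(y))$ the current minimum of $v$ equals $g_v(m_0)$, where $m_0$ is the original minimum returned by the subroutine and $g_v$ is the accumulated lazy tag.

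For $\mathsf{Modify}(l,r,f)$ I would use textbook lazy propagation restricted to min-\emph{values}: decompose $[l,r]$ into $O(\log n)$ canonical nodes, compose $f$ into each covered node's tag and replace its stored value by $f(\text{value})$, pushing tags to (possibly newly created) children along the $O(\log n)$ visited path, materializing values by quantum minimum-finding only when required. For $\mathsf{Query}(l,r)$ I proceed in two cheap stages: first combine the current values of the $O(\log n)$ canonical nodes covering $[l,r]$ to obtain the global minimum value $V$; then, taking the leftmost canonical node whose value equals $V$, descend one root-to-leaf path inside it, at each node going left iff the left child's current value equals $V$, until reaching the unique leftmost position attaining $V$. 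This returns the smallest-index minimizer. The main conceptual obstacle is precisely that a non-injective update (e.g.\ $\mathsf{assign}_c$) can move the arg-min within a block, so one \emph{cannot} maintain a fixed arg-min index under lazy tags; storing values only and recovering the index by this leftmost-value descent sidesteps the issue entirely, since ties are broken correctly by construction.

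It remains to bound the total cost, which reduces to bounding the total materialization cost $\sum_{v\ \text{materialized}} O(\sqrt{s_v})$, as all remaining bookkeeping is classical and $O(\log n)$ per operation. Each of the $q$ operations triggers descents that touch only $O(1)$ nodes per level of the tree, so at level $k$ (node size $n/2^{k}$) the number of distinct materialized nodes is at most $\min\{2^{k},\,O(q)\}$. Therefore
\[
\sum_{v\ \text{materialized}} O\!\left(\sqrt{s_v}\right)
= \sum_{k=0}^{\lceil \log_2 n\rceil} \min\{2^{k},\, O(q)\}\cdot O\!\left(\sqrt{n/2^{k}}\right)
= O\!\left(\sqrt{nq}\right),
\]
where the two regimes $2^{k}\le O(q)$ and $2^{k}>O(q)$ each contribute a geometric sum dominated by the crossover term $2^{k}\approx q$. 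Finally, boosting each of the $O(q\log n)$ minimum-finding invocations to failure probability $1/\poly(nq)$ costs an extra $O(\log(nq))$ factor and allows a union bound over all invocations, yielding overall success with query complexity $\widetilde O(\sqrt{nq})$; I expect verifying the lazy-materialization invariant (that an unmaterialized node has only been covered by full updates) and the telescoping of the level sum to be the two places demanding care, while the use of only the input oracle $\mathcal{O}_a$ and classical segment-tree bookkeeping makes the construction QRAM-free.
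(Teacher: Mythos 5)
Your construction is correct, and its core---a segment tree whose nodes are materialized only on first visit, each materialization done by quantum minimum finding on the \emph{original} array through $\mathcal{O}_a$, with lazy tags recording full-cover updates---is exactly the paper's lazy node creation. You deviate genuinely in two places, and both deviations hold up. First, the paper pairs lazy creation with a second mechanism, \emph{top-down completion} (node $k$ is force-created before the $k$-th operation), and its analysis is per-operation: top-down completion guarantees the nodes $m_1 < \cdots < m_\ell$ created during the $k$-th operation satisfy $m_1 > k$ and $m_i \geq 2^{\floor{\rbra{i-1}/8}} m_1$, so each operation costs $\widetilde O\rbra{\sqrt{n/k}}$ in the \emph{worst case}, summing to $\widetilde O\rbra{\sqrt{nq}}$. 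Your per-level amortized count $\sum_k \min\cbra{2^k, O\rbra{q}} \cdot O\rbra{\sqrt{n/2^k}} = O\rbra{\sqrt{nq}}$ is valid (each operation touches $O\rbra{1}$ nodes per level, each node is materialized at most once, and both regimes of the sum are geometric with the crossover $2^k \approx q$ dominating), and it shows top-down completion is not needed for the aggregate bound claimed in the theorem; what you give up is only the paper's per-operation worst-case guarantee, which the theorem statement does not require. Second, your leftmost-value descent at query time addresses a subtlety the paper glosses over: the paper stores the output of $\mathsf{findmin}$ in $\mathsf{node}.v$ and applies $f$ to it under lazy tags, and, as you correctly observe, a non-injective $f$ (e.g., $\mathsf{assign}_c$) can move the smallest-index minimizer within a block, so a stored arg-min need not respect the paper's stated smallest-index tie-breaking; recovering the index by descending toward the leftmost child attaining the minimum value fixes this cleanly, and the extra $O\rbra{1}$ materializations per level per query stay within your counting budget. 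Two small refinements: state your invariant as ``if $v$ is unmaterialized, every past modification meeting $v$'s range fully covered it and is recorded in a strict ancestor's tag''---a partially covering update necessarily visits $v$ itself (no ancestor of a partially covered node can be fully covered), which is why materialization is forced; and note that the paper obtains the sharper bound $O\rbra{\sqrt{nq\log\rbra{q\log\rbra{n}/\varepsilon}}}$ by invoking the small-error minimum finding of \cref{lemma:findmin}, which puts the error dependence inside the square root, rather than your multiplicative $O\rbra{\log\rbra{nq}}$ boosting---a difference that is immaterial under $\widetilde O\rbra{\cdot}$.
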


    Moreover, we show that \cref{thm:QDynamicRMQ-intro} is optimal up to polylogarithmic factors.

    \begin{theorem} [Lower bound for RMQ, \cref{thm:QDynamicRMQ-lower-bound}] \label{thm:QDynamicRMQ-lower-bound-intro}
        Any quantum algorithm for RMQ that supports $q$ operations on an array of length $n$ requires query complexity $\Omega\rbra{\sqrt{nq}}$. 
    \end{theorem}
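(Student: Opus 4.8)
The plan is to prove the lower bound by reduction from the problem of solving many independent instances of quantum search, for which a strong direct product (direct sum) bound is available. Since this is a lower bound, it suffices to exhibit one hard family of inputs together with a fixed sequence of $q$ operations that every correct RMQ algorithm must handle. In particular, I would use only $\mathsf{Query}$ operations and ignore $\mathsf{Modify}$ entirely, so that the bound in fact already holds for the static, query-only version of the problem; adding range updates can only make the task harder. Because modifications are never invoked, no adversarial cleverness on the part of the algorithm can avoid the work we are about to force.

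Concretely, I would work in the relevant regime $q \le n$ and set $b = \min\{q, \lfloor n/2 \rfloor\} = \Theta(q)$, partitioning $[n]$ into $b$ contiguous blocks $B_1, \dots, B_b$ with endpoints $l_j, r_j$, each of size $m = \Theta(n/b) = \Theta(n/q) \ge 2$. For a hidden tuple $s = (s_1, \dots, s_b)$ with $s_j \in B_j$, define the array $a_s$ by $a_s[s_j] = 0$ and $a_s[i] = 1$ for every other $i \in B_j$, so that each block carries a unique minimum whose position encodes one search instance over $m$ elements. The fixed operation sequence consists of the $b$ queries $\mathsf{Query}(l_j, r_j)$, padded with trivial queries to reach exactly $q$ operations. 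Since $a_s$ is $\{0,1\}$-valued and each block holds a unique $0$, the correct answer to $\mathsf{Query}(l_j, r_j)$ is exactly $s_j$, so any correct RMQ algorithm recovers the entire tuple $s$.

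The crucial observation is that a query to $\mathcal{O}_{a_s}$ on an index $i$ reveals only the single bit $a_s[i]$, which lies in exactly one block; hence it is precisely a query to the search oracle of that block, and no single oracle call can serve two instances at once. Therefore any RMQ algorithm answering the $b$ queries is a quantum algorithm solving $b$ independent instances of searching for a unique marked element among $m$ elements, using the same number of oracle calls. Each such instance has quantum query complexity $\Theta(\sqrt{m})$, and by the strong direct product theorem for quantum search (equivalently, the additivity of the negative-weight adversary bound under direct sum), solving all $b$ instances with constant success probability requires $\Omega(b\sqrt{m}) = \Omega(\sqrt{b\,n}) = \Omega(\sqrt{nq})$ queries, giving the claim.

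The main obstacle is this direct-sum step: one must rule out that a quantum algorithm amortizes queries across the $b$ independent search instances and solves them jointly with $o(b\sqrt{m})$ queries. This is exactly what the strong direct product theorem forbids, and the reduction is engineered so that its hypothesis applies, namely the block-local nature of each oracle call guarantees that distinct instances share no queries. A secondary point requiring care is the degenerate regime where $q$ approaches $n$ and $m$ is driven down to its minimum value $2$; here I would verify that capping the number of blocks at $b = \Theta(q)$ with blocks of size $m = \Theta(1) \ge 2$ still yields $b\sqrt{m} = \Theta(n) = \Theta(\sqrt{nq})$, so that the bound survives across the full range $q \le n$.
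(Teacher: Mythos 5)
Your proposal is correct in substance but takes a genuinely different route from the paper's. The paper proves the lower bound \emph{dynamically}: it reuses the reduction of \cref{corollary:qkmin}, alternating $\mathsf{Query}\rbra{1,n}$ with $\mathsf{Modify}\rbra{i_j, i_j, x \mapsto +\infty}$ to repeatedly extract and delete the global minimum, so that any RMQ structure supporting $q = 2k$ operations solves $k$-minimum finding; the bound then follows in four lines from the known $\Omega\rbra{\sqrt{nk}}$ lower bound for quantum $k$-minimum finding \cite[Theorem 8.1]{DHHM06}. You instead reduce from $b = \Theta\rbra{q}$ independent search instances on disjoint blocks and invoke a strong direct product (or adversary direct-sum) theorem. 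What your route buys: it uses only $\mathsf{Query}$ operations, so it establishes the stronger fact that even the \emph{static}, query-only version of the problem requires $\Omega\rbra{\sqrt{nq}}$ queries, whereas the paper's reduction essentially needs $\mathsf{Modify}$ to delete the minima it has found. What the paper's route buys: it needs no direct-product machinery at all, since the hard work is delegated to an off-the-shelf lower bound (whose own proof, incidentally, rests on a block-structured hard instance much like yours, so the two arguments are cousins one level down). Two points you should tighten: (i) cite the right version of the direct product theorem --- the classic Klauck--\v{S}palek--de Wolf SDPT is stated for the $k$-fold OR \emph{decision} problem, while you need the \emph{finding} version under the promise of a unique marked element per block; this is covered by the $k$-fold search SDPT of Ambainis--\v{S}palek--de Wolf, or by additivity of the negative-weight adversary bound under direct sums combined with Reichardt's tight characterization, or most simply by the $\Omega\rbra{\sqrt{nk}}$ bound for finding $k$ marked elements among $n$, which applies directly to your instance without any block bookkeeping; (ii) your remark that ``no single oracle call can serve two instances at once'' is misleading as stated, since a quantum query may superpose indices across all blocks --- it is harmless only because the direct product theorem requires the instances to live on disjoint input \emph{variables}, not that queries be block-local, so you should not present query locality as the hypothesis being verified. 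Your handling of the boundary regime (capping $b = \Theta\rbra{q}$ with $m = \Theta\rbra{1} \geq 2$ so that $b\sqrt{m} = \Theta\rbra{\sqrt{nq}}$) is fine and matches the implicit $q = O\rbra{n}$ regime in which the claimed bound is meaningful, since for $q \gg n$ an algorithm could read the whole array with $n$ queries and answer all further operations classically.
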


    \cref{thm:QDynamicRMQ-intro} and \cref{thm:QDynamicRMQ-lower-bound-intro} together reveal an optimal quantum query complexity of $\widetilde \Theta\rbra{\sqrt{nq}}$ for RMQ. 
    We compare our quantum algorithm for RMQ with the classical counterparts in \cref{tab:dynamic}. 

\begin{table}[!htp]
\centering
\caption{Time complexity for RMQ.}
\normalsize
\label{tab:dynamic}
\begin{tabular}{ccc}
\toprule
Type & Time Complexity & References \\ \midrule
Classical & $O\rbra{n+q\log n}$ & \cite{ST83} \\
Classical & $\Theta\rbra{n+q\log n/\log \log n}$ & \cite{BDR11,AHR98} \\ \midrule
Quantum & $\widetilde \Theta\rbra{\sqrt{nq}}$ & Theorems \ref{thm:QDynamicRMQ-intro} and \ref{thm:QDynamicRMQ-lower-bound-intro} \\ \bottomrule
\end{tabular}
\end{table}

    \subsection{Implications} \label{sec:imply}
    \cref{thm:QDynamicRMQ-intro} implies a new quantum algorithm for $k$-minimum finding, namely, finding the smallest $k$ elements out of $n$. Quantum $k$-minimum finding serves as an important subroutine in many applications, including graph problems \cite{DHHM06}, quantum state preparation \cite{Ham22}, optimization \cite{LDR23,GJLW23}, etc. 
    Existing time-efficient implementations for quantum $k$-minimum finding require the assumption of QRAM. 
    As a corollary of \cref{thm:QDynamicRMQ-intro}, we present a time-efficient quantum algorithm for $k$-minimum finding \textit{without} QRAM. 

    \begin{corollary} [Quantum $k$-minimum finding without QRAM, \cref{corollary:qkmin}] \label{corollary:qkmin-intro}
        There is a quantum algorithm without QRAM for $k$-minimum finding with time complexity $O\rbra{\sqrt{nk\log\rbra{k \log n}}\log{n}}$.
    \end{corollary}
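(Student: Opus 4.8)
The plan is to reduce $k$-minimum finding to a sequence of $O\rbra{k}$ RMQ operations and then invoke \cref{thm:QDynamicRMQ-intro}. First I would initialize the RMQ data structure on the input array $a\substr{1}{n}$, taking $F$ to contain $\mathsf{assign}_{+\infty}$; this is a valid choice since $\mathsf{assign}_c\rbra{\min\rbra{x, y}} = c = \min\rbra{\mathsf{assign}_c\rbra{x}, \mathsf{assign}_c\rbra{y}}$, so the required compatibility with $\min$ holds. The $k$ smallest elements are then extracted by repeating, $k$ times, the two-step routine: call $\mathsf{Query}\rbra{1, n}$ to get the index $j$ of the current global minimum, record $a\sbra{j}$, and call $\mathsf{Modify}\rbra{j, j, \mathsf{assign}_{+\infty}}$ to delete that entry so the next query returns the next smallest element. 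Because ties are resolved by smallest index in the definition of $\operatorname{RMQ}\rbra{a, l, r}$, the $k$ recorded indices are exactly the $k$ smallest elements, and the pipeline inherits the no-QRAM property of the data structure.

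Next I would bound the cost. The routine issues $q = 2k$ operations ($k$ queries interleaved with $k$ modifications); assuming $k \le n/2$ so that $q \le n$ (the complementary regime $k = \Theta\rbra{n}$ is absorbed by the bound, since $\sqrt{nk} = \Theta\rbra{n}$ there), \cref{thm:QDynamicRMQ-intro} yields query complexity $\widetilde O\rbra{\sqrt{nk}}$. As the data structure is time-efficient up to a single extra logarithmic factor, the corresponding time complexity is $\widetilde O\rbra{\sqrt{nk}} \cdot O\rbra{\log n}$, which produces the outer $\log n$ in the target bound $O\rbra{\sqrt{nk \log\rbra{k \log n}} \log n}$.

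The main obstacle, and what pins down the $\log\rbra{k \log n}$ factor inside the root, is the error analysis. The $k$ rounds are sequentially dependent: one incorrect $\mathsf{Query}$ spoils the corresponding deletion and hence every later round, so I cannot settle for constant per-query success. Since each operation touches $O\rbra{\log n}$ nodes of the underlying tree-based structure, there are $O\rbra{k \log n}$ primitive events that must all succeed; forcing the per-event failure probability down to $O\rbra{1 / \rbra{k \log n}}$ and union-bounding keeps the total failure probability constant. This amplification is exactly the polylogarithmic overhead hidden inside the $\widetilde O$ of \cref{thm:QDynamicRMQ-intro}, and for the accuracy needed here it contributes a $\log\rbra{k \log n}$ factor that enters under the square root. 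I expect the delicate part to be verifying that this amplified, bounded-error search composes correctly across the dependent rounds \emph{in the hybrid quantum-classical model without QRAM}, using only the oracle $\mathcal{O}_a$ together with classical word-RAM bookkeeping, so that the stated bound $O\rbra{\sqrt{nk \log\rbra{k \log n}} \log n}$ indeed holds with constant success probability.
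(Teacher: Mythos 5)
Your proposal is correct and essentially identical to the paper's own proof of \cref{corollary:qkmin}: the same reduction issuing $q = 2k$ alternating $\mathsf{Query}\rbra{1, n}$ and $\mathsf{Modify}\rbra{i_j, i_j, x \mapsto +\infty}$ operations on $\mathsf{QDynamicRMQ}$, with the same error amplification driving each primitive minimum-finding call down to failure probability $O\rbra{1/\rbra{k \log n}}$ (the paper's $\varepsilon/\rbra{4q\log_2 n}$ inside \cref{thm:QDynamicRMQ}), which is exactly where the $\log\rbra{k\log n}$ factor under the square root comes from. The compositional concern you flag at the end is already discharged by \cref{thm:QDynamicRMQ} itself, which guarantees overall success probability at least $1-\varepsilon$ across all $q$ sequentially dependent operations in the hybrid quantum-classical model without QRAM.
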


    As the Boolean case of $k$-minimum finding, finding $k$ marked elements (cf.\ \cite[Lemma 2]{dGdW02}) was recently known to have a time-efficient quantum implementation without QRAM in \cite{vAGN24}.
    In contrast to this work, their approach is specifically designed for the Boolean case, to which \cref{corollary:qkmin-intro} directly applies (though with a logarithmic overhead in the query complexity). 

    In comparison, the well-known quantum algorithm for $k$-minimum finding is due to \cite[Theorem 3.4]{DHHM06} with optimal query complexity $\Theta\rbra{\sqrt{nk}}$, which, however, is not time-efficient (as noted in \cite{vAGN24}) as only an implementation using $\widetilde O\rbra{\sqrt{n}k^{3/2}}$ elementary quantum gates is known if QRAM is not allowed.\footnote{If QRAM (in particular, quantum-read classical-write random access memory, a.k.a.\ QCRAM) is allowed, then the quantum $k$-minimum finding proposed in \cite{DHHM06} can be implemented using $O\rbra{\sqrt{nk} \log k}$ QRAM operations and with time complexity $O\rbra{\sqrt{nk} \log n}$.}
    It is noted that the Boolean result of \cite{vAGN24} implies a quantum algorithm for $k$-minimum finding with time complexity $O\rbra{\sqrt{nk} \rbra{\log^3k + \log n \log \log n} \log n}$.\footnote{This was noted by an anonymous reviewer. First, find the index $j$ of the, for example, $\rbra{2k+10}$-th smallest element with time complexity $O\rbra{\sqrt{nk}\log^2 n \log \log n}$ using the algorithm of \cite[Theorem 1.7]{NW99}. Then, the Boolean result of \cite[Theorem 1.1]{vAGN24} can be used to find all indices with value less than $a\sbra{j}$ with time complexity $O\rbra{\sqrt{nk}\log^3k \log n}$. Finally, (the indices of) the $k$ smallest elements can be found with classical postprocessing.
    The total time complexity of this approach is dominated by the sum of the time complexities of the first two steps, which is $O\rbra{\sqrt{nk} \rbra{\log^3k + \log n \log \log n} \log n}$.\label{footnote:another-qkmin}}
    Nevertheless, the time complexity of \cref{corollary:qkmin-intro} is better for all cases in polylogarithmic terms. 

    We compare the aforementioned results related to quantum $k$-minimum finding in \cref{tab:qkm}. 

\begin{table}[!htp]
\centering
\caption{Time complexity of quantum $k$-minimum finding.}
\normalsize
\label{tab:qkm}
\begin{tabular}{cccc}
\toprule
Case            & QRAM       & Time Complexity        & References                                     \\ \midrule
General & Y & $O\rbra{\sqrt{nk}\log n}$ & \cite{DHHM06} \\
General & N & $\widetilde O\rbra{\sqrt{n}k^{3/2}}$ & \cite{DHHM06} \\
General & N & $O\rbra{\sqrt{nk} \rbra{\log^3k + \log n \log \log n} \log n}$ & Implied by \cite{vAGN24,NW99} \\
General & N & $O\rbra{\sqrt{nk\log\rbra{k \log n}}\log{n}}$ & \cref{corollary:qkmin-intro} \\
\midrule
Boolean & Y & $O\rbra{\sqrt{nk}\log n}$ & \cite{dGdW02} \\
Boolean & N & $\widetilde O\rbra{\sqrt{n}k^{3/2}}$ & \cite{dGdW02} \\
Boolean & N & $O\rbra{\sqrt{nk}\log^3k \log n}$ & \cite{vAGN24} \\
\bottomrule
\end{tabular}
\end{table}

    \subsection{Techniques}

    \subsubsection{Upper bound}
    Our quantum algorithm in \cref{thm:QDynamicRMQ-intro} is under the framework of \textit{segment tree} (cf. \cite{BCKO08}, also known as the \textit{augmented B-tree} \cite{Pat08}). 
    A segment tree consists of $O\rbra{n}$ nodes to store an array of length $n$, with node $1$ the root for the range $\sbra{1, n}$. Node $k$ has left child node $2k$ and right child node $2k+1$ (if exists), and the range of node $k$ is the union of those of nodes $2k$ and $2k+1$. 
    For RMQ, each node maintains the minimum element over the range of that node. 
    Textbook solution to RMQ has time complexity $O\rbra{n+q\log{n}}$ for $q$ operations, and the current best solution has time complexity $O\rbra{n + q \log n / \log \log n}$ \cite{BDR11}. 
    
    Known classical approaches initialize the segment tree in $O\rbra{n}$ time. 
    In sharp contrast, we observe that the initialization is not necessary in the quantum case. 
    For this purpose, we introduce two new techniques to implementing segment trees that can take advantage of quantum computing: top-down completion and lazy node creation. 

\begin{figure}[!htp]
    \centering
    \includegraphics[scale=0.9]{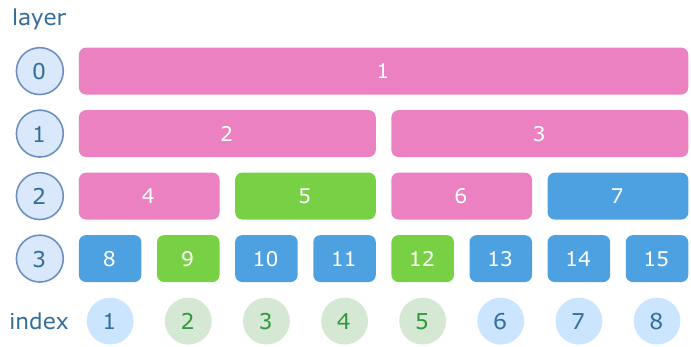}
    \caption{Segment tree for RMQ. In the example, $n = 8$ and the query range is $\sbra{2, 5}$ which is covered by (and thus involves) the green nodes $5, 9, 12$. In addition to the green nodes, the pink nodes $1, 2, 3, 4, 6$ are also visited, while the blue nodes are not visited.}
    \label{fig:dynamic}
\end{figure}

    \textit{Top-down completion}. In our quantum algorithm, we use the following key strategy: 
    \begin{itemize}
        \item Before the $k$-th operation, we create node $k$ (if it is valid and has not been created yet).
    \end{itemize}
    We will see the benefits of this strategy after introducing lazy node creation. At this stage, we note that the range of node $k$ has length $O\rbra{n/k}$, as node $k$ is in layer $O\rbra{\log k}$. 
    When node $k$ is created, we initialize it by quantum minimum finding in $\widetilde O\rbra{\sqrt{n/k}}$ time. 
    After $q$ operations, nodes numbered from $1$ to $q$ are created. Thus, the total cost of top-down completion is
    \[
    \sum_{k=1}^q \widetilde O\rbra*{\sqrt{\frac{n}{k}}} = \widetilde O\rbra{\sqrt{nq}}.
    \]

    \textit{Lazy node creation}. Except for the top-down completion strategy, we also need to visit the nodes that are not created yet when dealing with RMQ queries and range modifications. 
    For this, we use the following simple strategy:
    \begin{itemize}
        \item Create a node on the first visit. 
    \end{itemize}
    
    Now we will see the benefits of both strategies together. 
    Suppose that we are maintaining the $k$-th operation. 
    We note that an RMQ query or a range modification will involve at most $\ell = O\rbra{\log n}$ nodes in total, and at most $4$ nodes in each layer (see \cref{fig:dynamic} for an illustrative example). 
    Among these nodes, only those with number greater than $k$ (because of top-down completion) will possibly be created according to lazy node creation. 
    If we sort the nodes to be created by their numbers $m_1 < m_2 < \dots < m_\ell$ with $m_1 > k$, then it can be shown that $m_i \geq 2^{\floor{\rbra{i-1}/8}} m_1$ for every $1 \leq i \leq \ell$. 
    This means that the cost of lazy node creation in the $k$-th operation is 
    \[
    \sum_{i=1}^{\ell} \widetilde O\rbra*{\sqrt{\frac{n}{m_i}}} = \sum_{i=1}^{\ell} \widetilde O\rbra*{\sqrt{\frac{n}{2^{\floor{\rbra{i-1}/8}} k}}} = \widetilde O\rbra*{\sqrt{\frac{n}{k}}}.
    \]

    In summary, the total cost of both top-down completion and lazy node creation for maintaining $q$ operations is $\widetilde O\rbra{\sqrt{nq}}$, which gives the time complexity of our quantum algorithm for RMQ. 
    In detailed analysis for logarithmic terms, we use the small-error quantum minimum finding given in \cite{WY23}, and show that our quantum algorithm has query complexity $O\rbra{\sqrt{nq\log\rbra{q\log{n}}}}$, which is also time-efficient without QRAM (see \cref{thm:QDynamicRMQ} for details).
    
    \subsubsection{Lower bound}
    The lower bound in \cref{thm:QDynamicRMQ-lower-bound-intro} is based on a reduction from $k$-minimum finding to RMQ.
    The reduction is simple by repeating the following $k$ times: query the current global minimum element and delete it (in actual operation, set it to $+\infty$) using RMQ operations.
    Then, the lower bound for quantum RMQ comes from the lower bound $\Omega\rbra{\sqrt{nk}}$ for quantum $k$-minimum finding given in \cite{DHHM06} by setting $k = \Theta\rbra{q}$.

    \subsection{Related work} \label{sec:related-work}

    \textbf{Quantum data structures.}
    Quantum data structures were first studied in \cite{NABT15} known as quantum algorithms with (classical) advice, which were also further investigated in quantum cryptography known as the quantum random oracle model with auxiliary input (QROM-AI) \cite{HXY19}.
    There are also quantum data structures that support static Longest Common Extension (LCE) queries proposed in \cite{JN23,GT23}.
    Quantum $k$-minimum finding was recently generalized to the case with approximate values in \cite{GJW24}. 
    
    Other than quantum data structures, a special type of (classical) \textit{history-independent} data structures (cf.~\cite{YC22}) was found useful in designing time-efficient quantum algorithms for, e.g., element distinctness \cite{Amb07,BCJ+13}, triangle finding \cite{MSS07,JKM13,LG14}, subset sums \cite{BJLM13}, closest pair \cite{ACL+20}, and string problems \cite{LGS23,AJ23,JN23}. 

    \textbf{Static RMQ.}
    In the static scenario (where there is no update operation), the goal is to preprocess a fixed data structure $\mathfrak{D}$ for the given array $a\substr{1}{n}$, and then answer online RMQ queries with the help of $\mathfrak{D}$.\footnote{There are two types of schemes for static RMQ: \textit{systematic} and \textit{non-systematic} (also known as \textit{indexing} and \textit{encoding}, respectively). The systematic scheme allows access to both the input array $a$ and the additional preprocessed information $\mathfrak{D}$, while the non-systematic scheme only allows the additional preprocessed information $\mathfrak{D}$. For simplicity, we only consider the systematic scheme for static RMQ and we do not distinguish the two schemes.}
    The efficiency of $\mathfrak{D}$ is measured by a space-time pair $\ave{S, T}$, namely, $S$ is the size (in bits) of $\mathfrak{D}$ and $T$ is the time complexity of answering one RMQ query.
    A na{\"i}ve solution $\ave{\widetilde O\rbra{n^2}, O\rbra{1}}$ stores the answers of all $O\rbra{n^2}$ possible ranges and then just looks up the answer to each RMQ query; also, a brute force solution $\ave{O\rbra{1}, O\rbra{n}}$ preprocesses nothing but scans all the elements in the query range. 
    The textbook solution $\ave{O\rbra{n \log n}, O\rbra{1}}$ by \cite{HT84,GBT84} stores $O\rbra{n}$ \textit{words} and answers one RMQ query in constant time, via a reduction to the Lowest Common Ancestor (LCA) problem of the \textit{Cartesian tree} \cite{Vui80} of the array $a\substr{1}{n}$. 
    However, this solution is suboptimal in space (in bits). 
    The number of Cartesian trees of $n$ nodes is the $n$-th Catalan number $C_n = \binom{2n}{n}/\rbra{n+1}$, which implies an information-theoretic lower bound of $\log_2\rbra{C_n} = 2n - \Theta\rbra{\log n}$ bits for the storage of a Cartesian tree and thus for any data structure that supports constant-time RMQ queries. 
    The first linear-space constant-time solution $\ave{4n + O\rbra{n\log^2\log n/\log n}, O\rbra{1}}$ is due to \cite{Sad07b}. 
    It was later improved to $\ave{2n + O\rbra{n/\polylog\rbra{n}}, O\rbra{1}}$ in \cite{FH07,FH11,Pat08,NS14,DRS17}, with a space-time tradeoff $\ave{2n+n/\rbra{\log n/t}^{\Omega\rbra{t}}+\widetilde O\rbra{n^{3/4}}, O\rbra{t}}$ for any constant $t = O\rbra{1}$; 
    recently, an almost matching lower bound on the space complexity was shown in \cite{LY20,Liu21}.
    In addition to constant-time solutions, a space-time tradeoff $\ave{O\rbra{n/c}, O\rbra{c}}$ for any non-constant $c = O\rbra{n}$ was proposed in \cite{BDR12,FH11}, which is optimal due to the lower bound $ST = \Omega\rbra{n}$ given in \cite{BDR12}. 

    \textbf{Multidimensional RMQ.}
    RMQ is a special case of orthogonal range queries \cite{Lue78,Wil85}, with lower bounds and space-time tradeoffs studied in \cite{Fre81,Vai89}.
    It is also a special case of the \textit{partial sum} problem in the \textit{faithful} semigroup model \cite{Yao82}, with lower bounds and space-time tradeoffs studied in \cite{Yao82} and a matching upper bound given in \cite{CR89}. 

    The approach in \cite{GBT84} for one-dimensional RMQ implies an algorithm for $d$-dimensional (static) RMQ that supports RMQ queries in time $O\rbra{\log^{d-1}{n}}$. 
    Later, an improved algorithm was proposed in \cite{CR89} that achieves $O\rbra{\rbra{\alpha\rbra{n}}^d}$ time per RMQ query after $O\rbra{n}$ preprocessing time, where $\alpha\rbra{n}$ \cite{Tar75} is the functional inverse of the Ackermann function \cite{Ack28,Pet35,Rob48}.
    For $2$-dimensional RMQ, this was later improved to $O\rbra{1}$ time per query and $O\rbra{n \log^{o\rbra{1}}{n}}$ preprocessing time \cite{AFL07}. 
    Finally, an algorithm for $d$-dimensional RMQ for any constant $d$ was proposed in \cite{YA10} that supports RMQ query in $O\rbra{1}$ time, after $O\rbra{n}$ preprocessing. 
    Recently, a space-time tradeoff
    $\ave{O\rbra{n/c}, O\rbra{c \log^2{c}}}$ for $2$-dimensional RMQ was proposed in \cite{BDR12}.

    Another line of work considers the \textit{discrete} (static) RMQ problem where $n$ points on a $2$-dimensional plane are given with priority values. 
    This problem was initiated by \cite{Wil86,Cha88}.
    In \cite{Cha88}, an algorithm was proposed that uses $O\rbra{n}$ words and supports RMQ queries in $O\rbra{\log^{1+\varepsilon}{n}}$ time, which was recently improved to $O\rbra{\log{n}\log\log{n}}$ time per query with the same space complexity \cite{FMR12}. 
    Better time complexity for queries was recently achieved at a cost of increasing space. 
    In \cite{CLP11}, an algorithm was proposed that uses $O\rbra{n\log^{\varepsilon}{n}}$ words and supports RMQ queries in $O\rbra{\log \log {n}}$ time; another algorithm can be obtained by combining the results of \cite{KN09} and \cite{Cha13} (as noted in \cite{Nek21}). 
    Recently, two better algorithms were developed in \cite{Nek21}: one uses $O\rbra{n}$ words that supports RMQ queries in $O\rbra{\log^{\varepsilon}{n}}$ time, the other uses $O\rbra{n\log\log{n}}$ words that supports RMQ queries in $O\rbra{\log\log{n}}$ time.

    \subsection{Discussion}

    In this paper, we give tight (up to logarithmic factors) quantum upper and lower bounds for the RMQ problem. 
    A notable strength of our quantum algorithms is that they have time-efficient implementations without the assumption of QRAM. 
    As an application, we obtain a time-efficient quantum $k$-minimum finding without QRAM. 
    We mention some possible directions that are related to this work. 

    \begin{enumerate}
        \item Is it possible to reduce/remove the logarithmic factors in the query/time complexity of our quantum algorithms? 
        A possible improvement may be achieved by adopting the modern data structures in \cite{NS14,BDR11} into quantum computing. \label{ques:1}
        \item As mentioned in \cref{sec:related-work}, multidimensional RMQs have been extensively studied in the classical literature, with practical applications in computational geometry. 
        An interesting direction is to study how quantum computing can speedup multidimensional RMQs. 
        \item Can we find other data structure problems that can take advantage of quantum computing? 
        \item In this work, we find a novel way to reduce the requirement of QRAM. 
        We hope this could inspire other quantum algorithms without QRAM. 
    \end{enumerate}

    \section{Preliminaries}

    In this section, we introduce the basic concepts and tools for quantum query algorithms, and review textbook classical data structures for RMQ. 

    \subsection{Quantum query algorithms} \label{sec:def-quantum-algo}

    A quantum query algorithm $\mathcal{A}$ using queries to a quantum unitary oracle $\mathcal{O}$ can be described by a quantum circuit 
    \[
    \mathcal{A} = U_T \mathcal{O} \dots \mathcal{O} U_1 \mathcal{O} U_0,
    \]
    where $U_t$ are quantum unitary gates independent of $\mathcal{O}$ for $0 \leq t \leq T$. 
    The query complexity of $\mathcal{A}$ is defined as $\mathsf{Q}\rbra{\mathcal{A}} = T$. 
    The gate complexity of $\mathcal{A}$ is defined as $\mathsf{G}\rbra{\mathcal{A}} = \sum_{t=0}^T \mathsf{G}\rbra{U_t}$, where $\mathsf{G}\rbra{U_t}$ denotes the number of one- and two-qubit quantum gates that are sufficient to implement the unitary operator $U_t$.
    Suppose that $\mathcal{A}$ consists of two subsystems $\mathsf{W}$ (work) and $\mathsf{O}$ (output). 
    Then, the output of $\mathcal{A}$ is the outcome of the subsystem $\mathsf{O}$ after measuring $\mathcal{A}\ket{0}_{\mathsf{W}}\ket{0}_{\mathsf{O}}$ in the computational basis. 
    In other words, $\mathcal{A}$ will output $y$ with probability $\Abs{\bra{y}_{\mathsf{O}}\mathcal{A}\ket{0}_{\mathsf{W}}\ket{0}_{\mathsf{O}}}^2$.

    To analyze the time efficiency in this paper, we consider the hybrid quantum-classical computation (cf. \cite{CC22}) as follows. 
    A hybrid quantum-classical algorithm can be described by a sequence of procedures 
    \[
    \mathcal{P} \colon \mathcal{B}_0 \to \mathcal{A}_1 \to \mathcal{B}_1 \to \mathcal{A}_2 \to \mathcal{B}_2 \to \dots \to \mathcal{A}_m \to \mathcal{B}_m
    \]
    for some $m \geq 0$, 
    where $\mathcal{A}_i$ is a quantum query algorithm using queries to $\mathcal{O}$ for $1 \leq i \leq m$, and $\mathcal{B}_i$ is a classical algorithm independent of $\mathcal{O}$. 
    The hybrid algorithm $\mathcal{P}$ runs as follows:
    \begin{enumerate}
        \item For each $0 \leq i \leq m$, $\mathcal{B}_i$ has (word RAM) access to the (classical) outputs of the previous procedures $\mathcal{B}_0, \mathcal{A}_1, \mathcal{B}_1, \dots, \mathcal{A}_i$.
        \item For each $0 \leq i < m$, $\mathcal{B}_i$ outputs the classical description of $\mathcal{A}_{i+1}$.
        \item The output of $\mathcal{P}$ is defined as the output of $\mathcal{B}_m$.
    \end{enumerate}
    The query complexity of $\mathcal{P}$ is defined as $\mathsf{Q}\rbra{\mathcal{P}} = \sum_{i=1}^m \mathsf{Q}\rbra{\mathcal{A}_i}$. 
    The time complexity of $\mathcal{P}$ is defined as $\mathsf{T}\rbra{\mathcal{P}} = \mathsf{Q}\rbra{\mathcal{P}} + \sum_{i=1}^m \mathsf{G}\rbra{\mathcal{A}_i} + \sum_{i=0}^m \mathsf{T}\rbra{\mathcal{B}_i}$, where $\mathsf{T}\rbra{\mathcal{B}_i}$ is the (word RAM) time complexity of $\mathcal{B}_i$. 
    Roughly speaking, the time complexity of a hybrid quantum-classical algorithm is the sum of the query complexity, the gate complexity, and the word RAM time cost. 
    
    In comparison, the time complexity of quantum query algorithms with QRAM (quantum-read classical-write random access memory) defined in \cite{AdW22} is the sum of the query complexity, the gate complexity, and the QRAM time cost. 
    Here, the QRAM time cost is the number of QRAM operations performed during the execution of the quantum algorithm, where each QRAM operation can modify or retrieve information from an array $x\substr{1}{M}$, with $M$ the length and each $x\sbra{i}$ initialized to $0$, in the following manner:
    \begin{itemize}
        \item Modification: Given an index $i$ and a bit $b$, set $x\sbra{i}$ to $b$. 
        \item QRAM query: Perform the unitary operator $U_{\textup{QRAM}} \colon \ket{i}\ket{j} \mapsto \ket{i}\ket{j\oplus x\sbra{i}}$ on any set of qubits. 
    \end{itemize}
    Roughly speaking, QRAM operations allow us to read classical data coherently on a quantum computer, whereas word RAM operations only allow us to read classical data via classical addressing.

    \textit{Throughout this paper, the time-efficiency of all quantum algorithms are measured under the above framework of hybrid quantum-classical computation, unless otherwise stated.}

    \subsection{Quantum minimum finding}

    Quantum minimum finding was first studied in \cite{DH96} with Grover's algorithm \cite{Gro96} as a subroutine.
    Later in \cite{DHHM06}, it was generalized to finding the smallest $k$ elements and was used as a subroutine in solving graph problems. 
    To deal with small errors, we need the following version of quantum minimum finding given in \cite{WY23}. 
    \begin{lemma} [{\cite[Lemma 3.5]{WY23}}]
        \label{lemma:findmin}
        Given quantum oracle $\mathcal{O}_a$ to array $a\substr{1}{n}$, there is a quantum query algorithm $\mathsf{findmin}\rbra{a\substr{1}{n}, \varepsilon}$ that outputs $\argmin_{i \in \sbra{n}} a\sbra{i}$ with probability at least $1-\varepsilon$ with query complexity $O\rbra{\sqrt{n\log\rbra{1/\varepsilon}}}$ and time complexity $O\rbra{\sqrt{n\log\rbra{1/\varepsilon}} \log{n}}$.
    \end{lemma}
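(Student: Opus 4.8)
The plan is to build on the Dürr--Høyer minimum-finding algorithm \cite{DH96} (see also its $k$-minimum generalization in \cite{DHHM06}), whose basic loop repeatedly searches for an element smaller than the current candidate. Concretely, I would maintain a candidate index $y$, initialized uniformly at random, together with its value $a\sbra{y}$ obtained by one query. The core subroutine $\mathsf{findsmaller}$ takes the threshold $a\sbra{y}$ and runs amplitude amplification on top of Grover's search \cite{Gro96} for the predicate $P\rbra{i} = \sbra{a\sbra{i} < a\sbra{y}}$ (evaluated through $\mathcal{O}_a$ and an arithmetic comparison) to return, if one exists, an index $x$ with $a\sbra{x} < a\sbra{y}$; we then set $y \gets x$ and iterate. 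Ties are broken towards smaller indices by appending the index as a low-order tag to the value, so that $P$ induces a strict total order and $\argmin_i a\sbra{i}$ is the unique sink of the loop.

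The first key step is the rank analysis. Ordering the (tagged, hence distinct) values, let $r$ be the rank of the current candidate; a successful call to $\mathsf{findsmaller}$ returns an essentially uniform smaller element, so the rank decreases geometrically in expectation and the loop reaches rank $1$ after $O\rbra{\log n}$ improvements with high probability. When the current rank is $r$ there are $t = r-1$ marked indices among $n$, so a constant-success search costs $O\rbra*{\sqrt{n/r}}$ queries, and summing over the geometric rank sequence $r \approx n, n/2, \dots, 1$ gives a telescoping bound $\sum_r O\rbra*{\sqrt{n/r}} = O\rbra*{\sqrt{n}}$, recovering the textbook $O\rbra{\sqrt n}$ complexity at constant error.

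The delicate part, and the main obstacle, is pushing the error down to $\varepsilon$ while keeping the $\log\rbra{1/\varepsilon}$ factor \emph{inside} the square root: boosting each search by $O\rbra{\log\rbra{1/\varepsilon}}$ independent repetitions would place the logarithm \emph{outside}, yielding the inferior $O\rbra{\sqrt n \log\rbra{1/\varepsilon}}$. Instead I would invoke small-error amplitude amplification: searching among $n$ indices with at least $t$ marked ones can be done with failure probability $\delta$ using $O\rbra*{\sqrt{\rbra{n/t}\log\rbra{1/\delta}}}$ queries, with the logarithm already inside the root (this is the tight small-error complexity of unstructured search, obtained by amplifying the good subspace with a single amplitude-amplification schedule rather than by black-box repetition). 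Plugging this into the rank sum gives $\sum_r \sqrt{\rbra{n/r}\log\rbra{1/\delta}} = \sqrt{\log\rbra{1/\delta}}\cdot O\rbra*{\sqrt n} = O\rbra*{\sqrt{n\log\rbra{1/\delta}}}$.

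It remains to aggregate the per-step failures \emph{without} paying an extra $\log\log n$ from a naive union bound over the $O\rbra{\log n}$ improvements. The route I would take is to fix a global query budget $B = \Theta\rbra*{\sqrt{n\log\rbra{1/\varepsilon}}}$, run improvement steps until it is spent, and return the best candidate; using the small-error search as a black box, I would bound via the geometric rank process the probability that $B$ is exhausted before reaching rank $1$, aiming to show this tail is at most $\varepsilon$. The favorable structure is that the dominant contribution to both the budget and the error is the final handful of searches (small $t$, large $\sqrt{n/t}$), where already a single $O\rbra*{\sqrt{n\log\rbra{1/\varepsilon}}}$-query search drives the error below $\varepsilon$, while the many cheap early searches have their logarithms absorbed by the geometric decay of $\sqrt{n/r}$. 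I expect this concentration/bookkeeping --- keeping $\log\rbra{1/\varepsilon}$ inside the root uniformly in $\varepsilon$, including constant $\varepsilon$ where no $\log\log n$ may appear --- to be the genuine technical hurdle, with the small-error search bound being the primitive that makes it possible. Finally, for the time complexity, each evaluation of $P$ compares $\Theta\rbra{\log n}$-bit tagged values using $O\rbra{\log n}$ gates, so multiplying the query bound by this factor yields time complexity $O\rbra*{\sqrt{n\log\rbra{1/\varepsilon}}\,\log n}$, as claimed.
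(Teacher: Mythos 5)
You should first note a structural point: the paper does not prove this lemma at all --- it is imported verbatim as \cite[Lemma 3.5]{WY23}, so there is no in-paper proof to compare against. The relevant comparison is with the proof in \cite{WY23}, and your plan does shadow the right architecture: the D{\"u}rr--H{\o}yer descent \cite{DH96} with index-tagged tie-breaking, the telescoping rank sum $\sum_r \sqrt{n/r} = O\rbra{\sqrt{n}}$, a small-error search primitive with the $\log\rbra{1/\varepsilon}$ \emph{inside} the square root, and a global-budget analysis in place of per-step boosting (which, as you correctly observe, would yield the inferior $O\rbra{\sqrt{n}\log\rbra{1/\varepsilon}}$, and a naive union bound over $O\rbra{\log n}$ improvement steps would cost an extra $\log\log n$).

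However, as a proof the proposal is incomplete at exactly the step that constitutes the lemma's content. The decisive claim --- that with budget $B = \Theta\rbra{\sqrt{n\log\rbra{1/\varepsilon}}}$ the probability of exhausting $B$ before the candidate reaches rank $1$ is at most $\varepsilon$, uniformly in $\varepsilon$ including the constant-error regime --- is stated as an aspiration (``aiming to show this tail is at most $\varepsilon$,'' ``I expect this concentration/bookkeeping \ldots to be the genuine technical hurdle'') rather than proved. Everything else in your sketch (constant-error minimum finding in $O\rbra{\sqrt{n}}$, the geometric rank decay, gate-level comparison of $\Theta\rbra{\log n}$-bit tagged values) is textbook; the lemma \emph{is} that tail bound, so deferring it means the proof does not yet exist. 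Two concrete holes would need filling. First, you never specify the per-call failure schedule: the rank $r$ is unknown to the algorithm, so you cannot set the search failure parameter $\delta$ as a function of $r$; you must fix a schedule (e.g., per-call budgets and cutoff sequences) and then analyze the resulting stochastic process, which is where the real work lies. Second, the primitive you lean on --- search with an unknown number $t \geq t_0$ of marked items, failure $\delta$, cost $O\bigl(\sqrt{\rbra{n/t_0}\log\rbra{1/\delta}}\bigr)$ --- needs justification beyond citing the $t=1$ small-error search bound: the obvious reductions (restricting to a random subset of size $\widetilde{O}\rbra{n/t_0}$, black-box repetition, or median-boosted amplitude estimation followed by exact amplification) each push the $\log\rbra{1/\delta}$ back \emph{outside} the root, so one has to either prove the promise-$t$ version or restructure the budget argument so that only the $t=1$ primitive is invoked at the final (dominant) stage. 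Until both are supplied, the proposal is a correct roadmap to the argument of \cite{WY23}, not a substitute for it.
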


    \section{Quantum Data Structure for RMQ}

    In this section, we investigate the quantum advantages in RMQ. 
    For completeness, we review a textbook classical approach for RMQ in \cref{sec:classic-dynamic-rmq}. 
    In \cref{sec:upper-bound-dynamic-rmq}, we provide a quantum algorithm for RMQ based on the framework of \cref{sec:classic-dynamic-rmq}. 
    In \cref{sec:k-min-find}, we apply our quantum RMQ to an implementation of quantum $k$-minimum finding without QRAM.
    Finally, we give a matching lower bound on the quantum query complexity of RMQ in \cref{sec:lower-bound-dynamic-rmq}.

    \subsection{A Classical Approach for RMQ} \label{sec:classic-dynamic-rmq}

We first recall that RMQ requires to support both RMQ queries and range modifications on the input array $a\substr{1}{n}$:
\begin{itemize}
    \item $\mathsf{Query}\rbra{l, r}$: Return $\operatorname{RMQ}\rbra{a, l, r}$. 
    \item $\mathsf{Modify}\rbra{l,r,f}$: Set $a\sbra{i} \gets f\rbra{a\sbra{i}}$ for each $l \leq i \leq r$, where $f$ is a function that satisfies $f\rbra{\min\rbra{x, y}} = \min\rbra{f\rbra{x}, f\rbra{y}}$.\footnote{Let $A$ be the set of possible values of the elements in the input array $a\substr{1}{n}$, and let $F$ be the set of all functions $f$ allowed in range modifications. We assume that $F \subseteq A^A$ is a monoid with ``$\circ$'' the function composition and the identity element $\mathsf{id}\colon x \mapsto x$, such that $f\rbra{\min\rbra{x, y}} = \min\rbra{f\rbra{x}, f\rbra{y}}$ for all $f \in F$. For the time efficiency, we assume that: (i) any $f \in F$ and $x \in A$ can be stored in $O\rbra{1}$ words; (ii) the composition $f \circ g$ can be computed in $O\rbra{1}$ time for any $f, g \in F$; and (iii) the evaluation $f\rbra{x}$ can be computed in $O\rbra{1}$ time for any $f \in F$ and $x \in A$.}
\end{itemize}

A textbook (classical) solution to RMQ is based on segment trees. 
A segment tree $\mathcal{T}$ for an array $a\substr{1}{n}$ consists of $O\rbra{n}$ nodes. 
Every node of the segment tree $\mathcal{T}$ is associated with a number $k$, denoted $\mathsf{nodes}\sbra{k}$, and it has four basic attributes: $l, r, \mathsf{lchild}, \mathsf{rchild}$, where $\sbra{l, r}$ is the index range it represents and $\mathsf{lchild}$ and $\mathsf{rchild}$ are its left and right children, respectively.
These basic attributes of each node are defined recursively as follows:
\begin{enumerate}
    \item $\mathsf{nodes}\sbra{1}.l = 1$ and $\mathsf{nodes}\sbra{1}.r = n$. 
    \item For every $k \geq 1$ with $\mathsf{nodes}\sbra{k}.l$ and $\mathsf{nodes}\sbra{k}.r$ already defined, if $\mathsf{nodes}\sbra{k}.l < \mathsf{nodes}\sbra{k}.r$, then define 
    \begin{itemize}
        \item $\mathsf{nodes}\sbra{k}.\mathsf{lchild} = \mathsf{nodes}\sbra{2k}$, $\mathsf{nodes}\sbra{2k}.l = \mathsf{nodes}\sbra{k}.l$, $\mathsf{nodes}\sbra{2k}.r = m$,
        \item $\mathsf{nodes}\sbra{k}.\mathsf{rchild} = \mathsf{nodes}\sbra{2k+1}$, $\mathsf{nodes}\sbra{2k+1}.l = m+1$, $\mathsf{nodes}\sbra{2k+1}.r = \mathsf{nodes}\sbra{k}.r$, 
    \end{itemize}
    where $m = \floor{\rbra{\mathsf{nodes}\sbra{k}.l+\mathsf{nodes}\sbra{k}.r}/2}$.
\end{enumerate}
Without loss of generality, we may assume that $n = 2^s$ for an integer $s \geq 1$; then, $\mathsf{nodes}\sbra{k}.l$ and $\mathsf{nodes}\sbra{k}.r$ are defined for every $1 \leq k \leq 2n-1$. 

To solve RMQ with a segment tree, we assign two extra attributes $v, g$ to each node, where $v$ means the range minimum and $g$ is the lazy tag. 
An implementation of the initialization procedure is given in \cref{algo:Initialize}, where each node $\mathsf{nodes}\sbra{k}$ is initialized with $v$ set to the range minimum and $g$ the identity function $\mathsf{id} \colon x \mapsto x$. 
The attribute $v$ of each node is computed through calling $\mathsf{update}$ after the initialization of its left and right children is done. 
It can be verified that \cref{algo:Initialize} has time complexity $O\rbra{n}$. 

\begin{algorithm}[!htp]
    \caption{$\mathsf{Initialize}\rbra{a\substr{1}{n}}$}
    \label{algo:Initialize}
    \begin{algorithmic}[1]
    \Procedure{$\mathsf{update}$}{$\mathsf{node}$}
        \State $\mathsf{node}.v \gets \min\rbra{\mathsf{node}.\mathsf{lchild}.v, \mathsf{node}.\mathsf{rchild}.v}$;
    \EndProcedure
    \Procedure{$\mathsf{initialize}$}{$k, l, r$}
    \State $\mathsf{nodes}\sbra{k}.l \gets l$;
    \State $\mathsf{nodes}\sbra{k}.r \gets r$;
    \State $\mathsf{nodes}\sbra{k}.g \gets \mathsf{id}$;
    \If {$l = r$}
        \State $\mathsf{nodes}\sbra{k}.v \gets a\sbra{l}$;
    \Else
        \State $m \gets \floor{\rbra{l+r}/2}$;
        \State $\mathsf{initialize}\rbra{2k, l, m}$;
        \State $\mathsf{initialize}\rbra{2k+1, m+1, r}$;
        \State $\mathsf{nodes}\sbra{k}.\mathsf{lchild} \gets \mathsf{nodes}\sbra{2k}$;
        \State $\mathsf{nodes}\sbra{k}.\mathsf{rchild} \gets \mathsf{nodes}\sbra{2k+1}$;
        \State $\mathsf{update}\rbra{\mathsf{nodes}\sbra{k}}$;
    \EndIf
    
    \EndProcedure
    \State \Return $\mathsf{initialize}\rbra{1, 1, n}$;
    \end{algorithmic}
\end{algorithm}

\textit{Range modifications}. The implementation of the range modifications is given in \cref{algo:Modify}. The idea is to search from the root $\mathsf{nodes}\sbra{1}$ of the segment tree and modify the lazy tags of at most $O\rbra{\log n}$ nodes. 
Specifically, considering a call to $\mathsf{modify}(\mathsf{node},l,r,f)$, we consider the two cases provided that  $[\mathsf{nodes}\sbra{k}.l,\mathsf{nodes}\sbra{k}.r]\cap[l,r]\neq\emptyset$: 
\begin{itemize}
    \item Type I: $[\mathsf{nodes}\sbra{k}.l,\mathsf{nodes}\sbra{k}.r]\not\subseteq[l,r]$. 
    Note that in each layer, there are at most two nodes $\mathsf{nodes}\sbra{k}$ satisfying $[\mathsf{nodes}\sbra{k}.l,\mathsf{nodes}\sbra{k}.r]\cap[l,r]\not=\emptyset$ and  $[\mathsf{nodes}\sbra{k}.l,\mathsf{nodes}\sbra{k}.r]\subseteq[l,r]$. 
    So we can recalculate these type I nodes from the information of their children in $O(\log n)$ time. 
    However, the number of nodes satisfying $[\mathsf{nodes}\sbra{k}.l,\mathsf{nodes}\sbra{k}.r]\subseteq[l,r]$ is much larger, and thus we cannot recalculate them individually. 
    To resolve this issue, we maintain an operator $g$ (which is an identity operator initially) in every node, where the operator $g$ of $\mathsf{nodes}\sbra{k}$ means that we would apply $g$ to all descendants of $\mathsf{nodes}\sbra{k}$. 
    \item Type II: $[\mathsf{nodes}\sbra{k}.l,\mathsf{nodes}\sbra{k}.r]\subseteq[l,r]$. 
    For every type II node $\mathsf{nodes}\sbra{k}$, we directly apply $f$ to $\mathsf{nodes}\sbra{k}.v$ and merge $f$ into $\mathsf{nodes}\sbra{k}.g$. These can be implemented efficiently under the assumption that the multiplication and evalutaion of operators $g$ takes $O(1)$ time. 
\end{itemize}
To ensure the correctness, every time we will visit the children of $\mathsf{node}$, we call the $\mathsf{pushdown}\rbra{\mathsf{node}}$ procedure to propagate the lazy tag $g$ of $\mathsf{node}$ to its children. 

\begin{algorithm}[t]
    \caption{$\mathsf{Modify}\rbra{l, r, f}$}
    \label{algo:Modify}
    \begin{algorithmic}[1]
    \Procedure{$\mathsf{pushdown}$}{$\mathsf{node}$}
        \State $\mathsf{node}.\mathsf{lchild}.v \gets \mathsf{node}.g \rbra{\mathsf{node}.\mathsf{lchild}.v}$;
        \State $\mathsf{node}.\mathsf{rchild}.v \gets \mathsf{node}.g \rbra{\mathsf{node}.\mathsf{rchild}.v}$;
        \State $\mathsf{node}.\mathsf{lchild}.g \gets \mathsf{node}.g \circ \mathsf{node}.\mathsf{lchild}.g$;
        \State $\mathsf{node}.\mathsf{rchild}.g \gets \mathsf{node}.g \circ \mathsf{node}.\mathsf{rchild}.g$;
        \State $\mathsf{node}.g \gets \mathsf{id}$;
    \EndProcedure
    \Procedure{$\mathsf{modify}$}{$\mathsf{node}, l, r, f$}
    \If {$\sbra{\mathsf{node}.l, \mathsf{node}.r} \subseteq \sbra{l, r}$}
        \State $\mathsf{node}.v \gets f\rbra{\mathsf{node}.v}$;
        \State $\mathsf{node}.g \gets f \circ \mathsf{node}.g$;
        \State \Return;
    \EndIf
    \State $\mathsf{pushdown}\rbra{\mathsf{node}}$;
    \If {$\sbra{\mathsf{node}.\mathsf{lchild}.l, \mathsf{node}.\mathsf{lchild}.r} \cap \sbra{l, r} \neq \emptyset$}
        \State $\mathsf{modify}\rbra{\mathsf{node}.\mathsf{lchild}, l, r, f}$;
    \EndIf
    \If {$\sbra{\mathsf{node}.\mathsf{rchild}.l, \mathsf{node}.\mathsf{rchild}.r} \cap \sbra{l, r} \neq \emptyset$}
        \State $\mathsf{modify}\rbra{\mathsf{node}.\mathsf{rchild}, l, r, f}$;
    \EndIf
    \State $\mathsf{update}\rbra{\mathsf{node}}$;
    \EndProcedure
    \State $\mathsf{modify}\rbra{\mathsf{nodes}\sbra{1}, l, r, f}$;
    \end{algorithmic}
\end{algorithm}

\textit{RMQ Queries}. In order to answer RMQ queries, we search from the root $\mathsf{nodes}\sbra{1}$ of the segment tree and visit at most $O\rbra{\log n}$ nodes that are sufficient to compute the RMQ. 
An implementation for answering RMQ queries is given in \cref{algo:Query}.
In particular, each time we visit a node during this procedure, we need to propagate the lazy tag $g$ of that node to its children. 

\begin{algorithm}[t]
    \caption{$\mathsf{Query}\rbra{l, r}$}
    \label{algo:Query}
    \begin{algorithmic}[1]
    \Function{$\mathsf{query}$}{$\mathsf{node}, l, r$}
    \If {$\sbra{\mathsf{node}.l, \mathsf{node}.r} \subseteq \sbra{l, r}$}
        \State \Return $\mathsf{node}.v$;
    \EndIf
    \State $\mathsf{pushdown}\rbra{\mathsf{node}}$;
    \If {$\sbra{\mathsf{node}.\mathsf{rchild}.l, \mathsf{node}.\mathsf{rchild}.r} \cap \sbra{l, r} = \emptyset$}
        \State \Return $\mathsf{query}\rbra{\mathsf{node}.\mathsf{lchild}, l, r}$;
    \ElsIf {$\sbra{\mathsf{node}.\mathsf{lchild}.l, \mathsf{node}.\mathsf{lchild}.r} \cap \sbra{l, r} = \emptyset$}
        \State \Return $\mathsf{query}\rbra{\mathsf{node}.\mathsf{rchild}, l, r}$;
    \Else
        \State \Return $\min\rbra{ \mathsf{query}\rbra{\mathsf{node}.\mathsf{lchild}, l, r}, \mathsf{query}\rbra{\mathsf{node}.\mathsf{rchild}, l, r} }$;
    \EndIf
    \EndFunction
    \State \Return $\mathsf{query}\rbra{\mathsf{nodes}\sbra{1}, l, r}$;
    \end{algorithmic}
\end{algorithm}

We can implement a data structure for RMQ with
Algorithms \ref{algo:Initialize}, \ref{algo:Modify}, and \ref{algo:Query}. 
We omit the simple proof of its correctness and complexity. 
\begin{lemma}[Classical RMQ]
    The RMQ implemented by
    Algorithms \ref{algo:Initialize}, \ref{algo:Modify}, and \ref{algo:Query} has time complexity $O\rbra{n + q\log n}$ for $q$ operations on an array of length $n$.
\end{lemma}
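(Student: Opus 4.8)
The plan is to split the argument into correctness and running time, proving each by a standard induction adapted to the lazy-propagation scheme. For correctness I would maintain the invariant (I): at every point during execution, $\mathsf{nodes}\sbra{k}.v = \min_{i \in \sbra{\mathsf{nodes}[k].l,\,\mathsf{nodes}[k].r}} A\sbra{i}$ holds for the root $k=1$, and more generally for any node $k$ as soon as $\mathsf{pushdown}$ has been called on all proper ancestors of $k$, where $A\sbra{\cdot}$ denotes the true current contents of the array after all $\mathsf{Modify}$ operations so far. The intended semantics of the lazy tag is that $\mathsf{nodes}[k].g$ is an operation already reflected in $\mathsf{nodes}[k].v$ but still pending for the subtree strictly below $k$, so the children's stored $v$-values are stale by exactly $\mathsf{nodes}[k].g$.

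I would prove (I) by induction over the sequence of primitive steps. The base case is \cref{algo:Initialize}, which sets every leaf to $a\sbra{\cdot}$ and computes internal $v$-values bottom-up via $\mathsf{update}$ with all tags equal to $\mathsf{id}$. For the inductive step there are three primitives to verify: (i) $\mathsf{pushdown}$ takes a node whose $v$ is correct and applies its pending $g$ to both children's $v$ while composing $g$ into the children's tags and resetting the node's tag to $\mathsf{id}$, preserving (I) because the children were stale by exactly $g$; (ii) $\mathsf{update}$ recomputes a node's $v$ as the min of its already-corrected children, which is valid once the children satisfy (I); and (iii) the Type II branch of $\mathsf{modify}$ sets $\mathsf{node}.v \gets f\rbra{\mathsf{node}.v}$ and $\mathsf{node}.g \gets f \circ \mathsf{node}.g$. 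The crux here is the homomorphism property $f\rbra{\min\rbra{x,y}} = \min\rbra{f\rbra{x}, f\rbra{y}}$, which guarantees that applying $f$ to the range-minimum equals the minimum of $f$ applied pointwise, so $\mathsf{node}.v$ stays correct in $O\rbra{1}$ time without touching the subtree, while associativity of $\circ$ and the monoid structure of $F$ ensure the accumulated pending operation is recorded faithfully. Since $\mathsf{modify}$ and $\mathsf{query}$ always call $\mathsf{pushdown}$ on a node before descending to a child, by the time the recursion reaches any node its ancestors have been pushed down and its $v$ is the true range-minimum; correctness of $\mathsf{Query}$ then follows because $\sbra{l,r}$ is covered by the disjoint ranges of the fully-contained nodes at which $\mathsf{query}$ returns, together with associativity and commutativity of $\min$.

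For the running time, I would first note that \cref{algo:Initialize} creates the $2n-1$ nodes with $O\rbra{1}$ work each (using the assumptions that elements, composition, and evaluation of $F$ cost $O\rbra{1}$), giving $O\rbra{n}$, and that it runs only once. It then remains to bound each $\mathsf{Query}$ and $\mathsf{Modify}$ by $O\rbra{\log n}$ visited nodes. This is the classical segment-tree bound: the tree has depth $O\rbra{\log n}$, and at each depth at most two visited nodes can have a range that \emph{properly} overlaps $\sbra{l,r}$ (the nodes whose ranges contain the endpoints $l$ and $r$); the recursion branches into both children only at such boundary nodes, whereas a node contained in $\sbra{l,r}$ terminates the recursion and a node disjoint from $\sbra{l,r}$ is never entered. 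Hence $O\rbra{1}$ nodes are visited per depth and $O\rbra{\log n}$ in total, each incurring $O\rbra{1}$ cost for the $F$-operations and the $\mathsf{pushdown}/\mathsf{update}$ calls. Summing over $q$ operations and adding the one-time $O\rbra{n}$ initialization yields $O\rbra{n + q\log n}$.

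The routine part is the complexity bound, which is exactly the textbook $O\rbra{\log n}$-nodes-per-operation argument. The one genuinely delicate point is pinning down invariant (I) precisely and checking that $\mathsf{pushdown}$ interacts correctly with the order in which $\mathsf{modify}$ and $\mathsf{query}$ update a node versus descend into its children; in particular, one must confirm that the single identity $f\rbra{\min\rbra{x,y}} = \min\rbra{f\rbra{x},f\rbra{y}}$ together with associativity of composition is precisely what makes the $O\rbra{1}$-time lazy update consistent with the eventual pushdown. I expect this invariant bookkeeping to be the main obstacle, while everything else is mechanical.
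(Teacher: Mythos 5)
Your proposal is correct, and it is precisely the standard textbook argument that the paper itself invokes here without writing out (the paper explicitly omits this proof as simple/folklore): correctness via the lazy-tag invariant resting on $f\rbra{\min\rbra{x,y}} = \min\rbra{f\rbra{x},f\rbra{y}}$ and on $\mathsf{pushdown}$ being called before every descent, plus the $O\rbra{\log n}$-nodes-per-operation bound from the fact that at each depth only the at most two nodes whose ranges contain a boundary of $\sbra{l,r}$ cause branching. Your invariant (I), including the precise qualifier that $\mathsf{nodes}\sbra{k}.v$ is current once all proper ancestors have been pushed down, and your accounting of the composition order of tags, match the intended argument exactly, so there is nothing to correct.
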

It is worth noting that in \cite{BDR11}, the time complexity for RMQ was recently improved to $O\rbra{n + q \log n / \log \log n}$.

    \subsection{Quantum upper bound} \label{sec:upper-bound-dynamic-rmq}

    We provide a quantum algorithm for RMQ as follows. 

    \begin{theorem} [Quantum RMQ] \label{thm:QDynamicRMQ}
        There is a quantum algorithm for RMQ that supports $q = O\rbra{n}$ operations on an array of length $n$ with overall success probability at least $1-\varepsilon$ with query complexity $O\rbra{\sqrt{nq\log\rbra{q\log\rbra{n}/\varepsilon}}}$ and time complexity $O\rbra{\sqrt{nq\log\rbra{q\log\rbra{n}/\varepsilon}}\log{n}}$. 
        More precisely, there is a quantum data structure $\mathsf{QDynamicRMQ}$ that supports the following operations:
        \begin{itemize}
            \item Initialization: $\mathsf{QDynamicRMQ}.\mathsf{Initialize}\rbra{\mathcal{O}_a, q, \varepsilon}$ returns an instance of $\mathsf{QDynamicRMQ}$ initialized with array $a\substr{1}{n}$, the number $q$ of operations, and the required error parameter $\varepsilon$, with $O\rbra{1}$-time preprocessing and no queries to $\mathcal{O}_a$. 
            \item Query: $\mathsf{QDynamicRMQ}.\mathsf{Query}\rbra{l, r}$ returns $\operatorname{RMQ}\rbra{a, l, r}$. 
            \item Modification: $\mathsf{QDynamicRMQ}.\mathsf{Modify}\rbra{l, r, f}$ sets $a\sbra{i} \gets f\rbra{a\sbra{i}}$ for every $l \leq i \leq r$, provided that $f \in F$. 
        \end{itemize}
        The data structure $\mathsf{QDynamicRMQ}$ can maintain $q$ query/modification operations on $a\substr{1}{n}$ with overall success probability at least $1-\varepsilon$ using $O\rbra{\sqrt{nq\log\rbra{q\log\rbra{n}/\varepsilon}}}$ queries to $\mathcal{O}_a$.
        Moreover, if the exact value of $q \leq n$ is not known in advance, then the query complexity is $O\rbra{\sqrt{nq\log\rbra{n/\varepsilon}}}$ and the time complexity is $O\rbra{\sqrt{nq\log\rbra{n/\varepsilon}}\log n}$.
    \end{theorem}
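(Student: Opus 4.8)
The plan is to take the classical segment-tree solution of \cref{sec:classic-dynamic-rmq} and replace its only expensive step---the $O\rbra{n}$ initialization in \cref{algo:Initialize}---by an on-demand quantum procedure, leaving \cref{algo:Modify} and \cref{algo:Query} essentially unchanged. Concretely, I would make $\mathsf{Initialize}$ a no-op (so that it runs in $O\rbra{1}$ time with no queries) and create each node only when it is first needed. When a node $\mathsf{nodes}\sbra{m}$ is created, its attribute $v$ is set to the minimum of the \emph{original} array over its range $\sbra{l_m, r_m}$, computed by one call to $\mathsf{findmin}$ from \cref{lemma:findmin}; since $\mathsf{nodes}\sbra{m}$ lies in layer $\floor{\log_2 m}$, its range has length $O\rbra{n/m}$ and the call costs $\widetilde O\rbra*{\sqrt{n/m}}$ queries. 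The tag $g$ is set to $\mathsf{id}$, and thereafter the node participates in $\mathsf{pushdown}$, $\mathsf{update}$, $\mathsf{modify}$, and $\mathsf{query}$ exactly as before; the only amendment is that $\mathsf{pushdown}$ first creates any not-yet-existing child before propagating its tag.

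Correctness reduces to a single invariant: at the moment $\mathsf{nodes}\sbra{m}$ is created (always via a root-to-$m$ descent that performs the standard $\mathsf{pushdown}$s along the way), every position in $\sbra{l_m, r_m}$ has so far received exactly the same composed modification $G$, namely the accumulated lazy tag then carried by its parent. The key observation is that a call $\mathsf{Modify}\rbra{l, r, f}$ descends into---and hence would have created---a node only when it \emph{partially} covers that node's range; so any modification that ever touched the range of a still-uncreated node must have covered it fully and been recorded as a tag at an ancestor, where the root-to-parent $\mathsf{pushdown}$s have since concentrated it into the parent's tag. Combined with the homomorphism assumption $f\rbra{\min\rbra{x, y}} = \min\rbra{f\rbra{x}, f\rbra{y}}$, applying $G$ to the original range-minimum gives $G\rbra{\min_i a\sbra{i}} = \min_i G\rbra{a\sbra{i}}$, the true current minimum. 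I would prove this invariant by induction over the sequence of operations, after which correctness of the whole structure follows from that of the classical segment tree.

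For the complexity I would combine the two strategies of the techniques section. \emph{Top-down completion}: immediately before the $k$-th operation, create $\mathsf{nodes}\sbra{k}$ (if valid and not yet created); the creation costs sum to $\sum_{k=1}^{q} \widetilde O\rbra*{\sqrt{n/k}} = \widetilde O\rbra*{\sqrt{nq}}$, and this guarantees that nodes $1, \dots, k$ all exist before the $k$-th operation. \emph{Lazy node creation}: every further node created during the $k$-th operation is touched by $\mathsf{modify}$/$\mathsf{query}$ or by a $\mathsf{pushdown}$ on their paths, so it has number exceeding $k$; since at most four nodes are touched per layer (see \cref{fig:dynamic}), the numbers $m_1 < \dots < m_\ell$ of the created nodes satisfy $m_i \geq 2^{\floor{\rbra{i-1}/8}} m_1 > 2^{\floor{\rbra{i-1}/8}} k$. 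Hence the per-operation cost is $\sum_{i=1}^{\ell} \widetilde O\rbra*{\sqrt{n/m_i}} = \widetilde O\rbra*{\sqrt{n/k}}$, and summing over all operations again gives $\widetilde O\rbra*{\sqrt{nq}}$.

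Finally I would track the logarithmic factors and the error. There are $O\rbra{q \log n}$ node creations in total, so setting the error of each $\mathsf{findmin}$ call to $\varepsilon' = \varepsilon / \Theta\rbra{q \log n}$ and taking a union bound yields overall success probability at least $1 - \varepsilon$; by \cref{lemma:findmin} each creation of $\mathsf{nodes}\sbra{m}$ then costs $O\rbra*{\sqrt{\rbra{n/m}\log\rbra{q\log\rbra{n}/\varepsilon}}}$ queries, and re-running the two sums above gives the claimed query complexity $O\rbra*{\sqrt{nq\log\rbra{q\log\rbra{n}/\varepsilon}}}$. The time complexity picks up the extra $\log n$ factor of \cref{lemma:findmin} together with the $O\rbra{\log n}$ classical bookkeeping per operation. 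When $q$ is not known in advance, I would bound the number of creations crudely by $O\rbra{n}$ (as $q \leq n$), take $\varepsilon' = \varepsilon/\Theta\rbra{n \log n}$, and obtain query complexity $O\rbra*{\sqrt{nq\log\rbra{n/\varepsilon}}}$. I expect the main obstacle to be the correctness invariant above---carefully arguing that a freshly created node's range is uniformly modified and that the parent's pending tag captures exactly the right composed operation, including for the off-path children created inside $\mathsf{pushdown}$.
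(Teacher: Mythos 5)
Your proposal matches the paper's proof essentially step for step: the same two techniques (top-down completion creating $\mathsf{nodes}\sbra{k}$ before the $k$-th operation, and lazy creation inside $\mathsf{pushdown}$), the same per-node $\mathsf{findmin}$ with error parameter $\varepsilon' = \varepsilon/\Theta\rbra{q\log n}$, the identical counting bound $m_i \geq 2^{\floor{\rbra{i-1}/8}} m_1$ from at most four touched nodes per layer, the same two sums yielding $O\rbra{\sqrt{nq\log\rbra{q\log\rbra{n}/\varepsilon}}}$, and the same $q_0 = n$ fallback when $q$ is unknown. It is correct (your explicit lazy-tag invariant is in fact more detailed than the paper's correctness discussion, which only analyzes the $\mathsf{findmin}$ failure probability; just note that top-down creation in the paper calls $\mathsf{create}$ directly rather than via a root-to-$m$ descent, which your invariant still covers since pending modifications of an uncreated node's range reside entirely in ancestor tags).
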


    \begin{algorithm}[t]
        \caption{$\mathsf{QDynamicRMQ}$}
        \label{algo:qdynamic-rmq}
        \begin{algorithmic}[1]
        \Procedure{$\mathsf{create}$}{$\mathsf{node}$}
        \If {$\mathsf{create}\rbra{\mathsf{node}}$ has not ever been called}
            \State Let $l \gets \mathsf{node}.l$ and $r \gets \mathsf{node}.r$ be the index range associated with $\mathsf{node}$. 
            \State $\mathsf{node}.v \gets \mathsf{findmin}\rbra{a\substr{l}{r}, \varepsilon/\rbra{4q\log_2 n}}$;
            \State $\mathsf{node}.g \gets \mathsf{id}$;
        \EndIf
        \EndProcedure
        \Procedure{$\mathsf{pushdown}$}{$\mathsf{node}$}
            \State $\mathsf{create}\rbra{\mathsf{node}.\mathsf{lchild}}$; 
            \State $\mathsf{create}\rbra{\mathsf{node}.\mathsf{rchild}}$;
            \State $\mathsf{node}.\mathsf{lchild}.v \gets \mathsf{node}.g \rbra{\mathsf{node}.\mathsf{lchild}.v}$;
            \State $\mathsf{node}.\mathsf{rchild}.v \gets \mathsf{node}.g \rbra{\mathsf{node}.\mathsf{rchild}.v}$;
            \State $\mathsf{node}.\mathsf{lchild}.g \gets \mathsf{node}.g \circ \mathsf{node}.\mathsf{lchild}.g$;
            \State $\mathsf{node}.\mathsf{rchild}.g \gets \mathsf{node}.g \circ \mathsf{node}.\mathsf{rchild}.g$;
            \State $\mathsf{node}.g \gets \mathsf{id}$;
        \EndProcedure
        \Procedure{$\mathsf{Initialize}$}{$\mathcal{O}_a, q, \varepsilon$}
            \State $k \gets 0$;
        \EndProcedure
        \Function{$\mathsf{Query}$}{$l, r$}
            \State $k \gets k + 1$;
            \State $\mathsf{create}\rbra{\mathsf{nodes}\sbra{k}}$;
            \State \Return $\mathsf{query}\rbra{\mathsf{nodes}\sbra{1}, l, r}$;
        \EndFunction
        \Procedure{$\mathsf{Modify}$}{$l, r, f$}
            \State $k \gets k + 1$;
            \State $\mathsf{create}\rbra{\mathsf{nodes}\sbra{k}}$;
            \State $\mathsf{modify}\rbra{\mathsf{nodes}\sbra{1}, l, r, f}$;
        \EndProcedure
        \end{algorithmic}
    \end{algorithm}

    \begin{proof}
        Without loss of generality, we assume that $n=2^s$ for some positive integer $s$. The segment tree on the input array $a\substr{1}{n}$ can be divided into $s+1$ layers, labeled $0,\ldots,s$ from top to bottom, where the layer $i$ has exactly $2^i$ nodes. For layer $i$, $\mathsf{nodes}\sbra{2^i},\ldots,\mathsf{nodes}\sbra{2^{i+1}-1}$ is placed from left to right. 
        Under this framework, $\mathsf{nodes}\sbra{k}$ has left child $\mathsf{nodes}\sbra{2k}$ and right child $\mathsf{nodes}\sbra{2k+1}$ for $1\le k<n$, and $\mathsf{nodes}\sbra{k}$ belongs to layer $\lfloor\log_2 k\rfloor$ for $1\le k<2n$. 
        See \cref{fig:dynamic} for an illustrative example. 

        Next, we introduce two new ideas from a high level: top-down completion and lazy node creation. 
        
        \textit{Top-down completion}. Instead of preprocessing nodes in advance, we can create $\mathsf{nodes}\sbra{k}$ and assign the answer of the range it represents, via quantum minimum finding, before the $k$-th operation. In the stage of initialization, $k$, the number of operations that have been processed till now, is set to $0$. As we will see, the top-down completion guarantees the query complexity and the time complexity of the $k$-th operation in the worst case.

        \textit{Lazy node creation}. 
        In sharp contrast to the ordinary segment tree where the $\mathsf{pushdown}$ procedure (in \cref{algo:Modify}) only updates necessary information, we design a quantum $\mathsf{pushdown}$ procedure where the left child and the right child of a node will be created if either of them has not been created yet. This ensures that each node has already created and initialized before visiting it.

        Based on top-down completion and lazy node creation, we override $\mathsf{Initialize}$, $\mathsf{Query}$, $\mathsf{Modify}$, and $\mathsf{pushdown}$ in Algorithms \ref{algo:Initialize}, \ref{algo:Modify}, and \ref{algo:Query} by the implementations in \cref{algo:qdynamic-rmq}.\footnote{Here, \cref{algo:qdynamic-rmq} implements the quantum RMQ, where a new function $\mathsf{create}$ is introduced, the functions $\mathsf{Initialize}$, $\mathsf{Query}$, $\mathsf{Modify}$, and $\mathsf{pushdown}$ have new implementations, and the remaining functions have the same implementations as in Algorithms \ref{algo:Initialize}, \ref{algo:Modify}, and \ref{algo:Query}.
        It is worth noting that the implementation of quantum RMQ in \cref{algo:qdynamic-rmq} uses the same functions $\mathsf{modify}$ and $\mathsf{query}$ with the same implementations as in \cref{algo:Modify,algo:Query}, respectively, whereas the function $\mathsf{pushdown}$ called by them uses the new implementation given in \cref{algo:qdynamic-rmq}.}
        We explain them in detail as follows. 
        \begin{itemize}
            \item $\mathsf{Initialize}\rbra{\mathcal{O}_a, q, \varepsilon}$. Rather than initialize the whole segment tree, we do nothing except for setting $k$, the number of operations that have been processed till now, to $0$. (It should be noted that we also store the values of $q$ and $\varepsilon$ for future use.)
            \item $\mathsf{Query}\rbra{l, r}$. Compared to the classical implementation given in \cref{algo:Query}, before calling $\mathsf{query}\rbra{\mathsf{nodes}\sbra{1}, l, r}$, we increment $k$ by $1$ and then create node $k$ by calling $\mathsf{create}\rbra{\mathsf{nodes}\sbra{k}}$ in \cref{algo:qdynamic-rmq}.
            \item $\mathsf{Modify}\rbra{l, r, f}$. Similar to $\mathsf{Query}\rbra{l, r}$, before calling $\mathsf{modify}\rbra{\mathsf{nodes}\sbra{1}, l, r, f}$, we increment $k$ by $1$ and then create node $k$ by calling $\mathsf{create}\rbra{\mathsf{nodes}\sbra{k}}$ in \cref{algo:qdynamic-rmq}.
            \item $\mathsf{pushdown}\rbra{\mathsf{node}}$.
            Compared to the classical implementation given in \cref{algo:Modify}, before updating necessary information, we create the left and right children of $\mathsf{node}$ by calling $\mathsf{create}\rbra{\mathsf{node}.\mathsf{lchild}}$ and $\mathsf{create}\rbra{\mathsf{node}.\mathsf{rchild}}$. 
        \end{itemize}
        The above procedures heavily depend on the new added special procedure $\mathsf{create}\rbra{\mathsf{node}}$.
        \begin{itemize}
            \item $\mathsf{create}\rbra{\mathsf{node}}$. 
            Initiate all the attributes of $\mathsf{node}$. Especially, $\mathsf{node}.v$ stores (the index of) the minimum in the range represented by $\mathsf{node}$; this is done by the quantum minimum finding in \cref{lemma:findmin} with success probability at least $1 - \varepsilon/\rbra{4q\log_2 n}$.
        \end{itemize}

        \textit{Correctness}. 
        It can be seen that each RMQ query or range modification will call the $\mathsf{create}$ procedure at most $4s \leq 4\log_2 n$ time. 
        This is because any range $\sbra{l, r}$ will only involve at most $4$ nodes in each layer except for layer $0$ (with $1$ node) and layer $1$ (with $2$ nodes). 
        As a result, the overall success probability of the quantum minimum finding is 
        \[
        \rbra*{1 - \frac{\varepsilon}{4q\log_2 n}}^{4s} \geq 1 - \frac{4s\varepsilon}{4q\log_2 n} \geq 1 - \varepsilon.
        \]
        Except for the calls to quantum minimum finding, all other operations are deterministic. 
        Therefore, we conclude that the overall success probability of $\mathsf{QDynamicRMQ}$ implemented by \cref{algo:qdynamic-rmq} is at least $1-\varepsilon$.

        \textit{Complexity}. 
        Let $\varepsilon' = \varepsilon/\rbra{4q\log_2 n}$. 
        Then, every call to the quantum minimum finding made in \cref{algo:qdynamic-rmq} is with successful probability at least $1-\epsilon'$.
        We note that $\mathsf{nodes}\sbra{k}$ belongs to layer $\floor{\log_2 k}$, and the range it represents has length $n/2^{\floor{\log_2 k}} \leq 2n / k$. 
        Then, by \cref{lemma:findmin}, calling the $\mathsf{create}\rbra{\mathsf{nodes}\sbra{k}}$ procedure has query complexity $t_k = O\rbra{\sqrt{n/k \cdot \log\rbra{1/\varepsilon'}}}$ and time complexity $O\rbra{t_k\log n}$.
        For top-down completion, if $q$ RMQ operations are processed, then $\mathsf{nodes}\sbra{1}, \mathsf{nodes}\sbra{2}, \dots, \mathsf{nodes}\sbra{q}$ are created, i.e., $\mathsf{create}\rbra{\mathsf{nodes}\sbra{k}}$ is called for each $1 \leq k \leq q$. In this case, the query complexity caused by top-down completion is
        \[
        \mathsf{Q}_1 := \sum_{k=1}^q t_k = \sum_{k=1}^q O\rbra*{\sqrt{\frac n k \log {\frac 1 {\varepsilon'}}}} = O\rbra*{\sqrt{nq \log {\frac 1 {\varepsilon'}}}}.
        \]

        On the other hand, in $k$-th operation, the nodes in layers $\lfloor\log_2 k\rfloor,\ldots,s$ may not have been fully created yet. 
        No matter whether the $k$-th operation is $\mathsf{Query}$ or $\mathsf{Modify}$, there will be at most $4$ nodes created in each layer (see \cref{fig:dynamic} for an illustrative example).
        Let $m_1 < m_2 < \dots < m_{\ell}$ be the nodes that are newly created.
        Then, 
        \begin{enumerate}
            \item $m_1 > k$ because of top-down completion. 
            \item $\ell = O\rbra{\log n}$ because any range will involve at most $O\rbra{\log n}$ nodes.
            \item $m_i \geq 2^{\floor{\rbra{i-1}/8}} m_1$ for $1 \leq i \leq \ell$. This is because there must be a layer between the layer of node $m_{i+8}$ and the layer of node $m_i$ (due to the fact that each layer has at most $4$ nodes visited), which implies that $m_{i+8} \geq 2m_i$.
        \end{enumerate}
        Therefore, the query complexity in the $k$-th operation caused by lazy node creation is at most
        \begin{align*}
            \mathsf{Q}_{2,k} := \sum_{i=1}^{\ell} t_{m_i} 
            & = \sum_{i=1}^{\ell} O\rbra*{\sqrt{\frac n {m_i} \log {\frac 1 {\varepsilon'}}}} \\
            & = \sum_{i=1}^{\ell} O\rbra*{\sqrt{\frac n {2^{\floor{\rbra{i-1}/8}} m_1} \log {\frac 1 {\varepsilon'}}}} \\
            & = O\rbra*{\sqrt{\frac n {m_1} \log {\frac 1 {\varepsilon'}}}}  \\
            & = O\rbra*{\sqrt{\frac n {k} \log {\frac 1 {\varepsilon'}}}}.
        \end{align*}
        In summary, the overall query complexity is 
        \[
        \mathsf{Q} := \mathsf{Q}_1 + \sum_{k = 1}^q \mathsf{Q}_{2, k} = O\rbra*{\sqrt{nq \log {\frac 1 {\varepsilon'}}}} + \sum_{k = 1}^q O\rbra*{\sqrt{\frac n {k} \log {\frac 1 {\varepsilon'}}}} = O\rbra*{\sqrt{nq \log \frac{q \log n}{\varepsilon}}}.
        \]
        A similar analysis will show that the overall time complexity is
        \[
        O\rbra{\mathsf{Q} \log n} = O\rbra*{\sqrt{nq \log \frac{q \log n}{\varepsilon}} \log n}.
        \]

        Finally, we consider the case when the exact value of $q$ is not known but only an upper bound $q_0 \geq q$ is given. 
        In this case, we can just initialize $\mathsf{QDynamicRMQ}$ by calling $\mathsf{Initialize}\rbra{\mathcal{O}_a, q_0, \varepsilon}$. 
        In particular, when $q_0 = n$, the query complexity is $\mathsf{Q} = O\rbra{\sqrt{nq\log\rbra{n/\varepsilon}}}$. 
    \end{proof}

    \subsection{Application: quantum \texorpdfstring{$k$}{k}-minimum finding} \label{sec:k-min-find}
    
    As an application, we obtain a quantum algorithm for $k$-minimum finding with the help of the data structure $\mathsf{QDynamicRMQ}$ given in \cref{thm:QDynamicRMQ}.

    \begin{corollary} [Quantum $k$-minimum finding without QRAM] \label{corollary:qkmin}
        There is a quantum algorithm for $k$-minimum finding with success probability $\geq 1-\varepsilon$ with query complexity $O\rbra{\sqrt{nk\log\rbra{k\log\rbra{n}/\varepsilon}}}$ and time complexity $O\rbra{\sqrt{nk\log\rbra{k\log\rbra{n}/\varepsilon}}\log{n}}$. 
        Moreover, the algorithm only uses $O\rbra{\log{n}}$ qubits and it does not require the assumption of QRAM.
    \end{corollary}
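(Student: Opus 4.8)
The plan is to reduce $k$-minimum finding to performing $k$ sequential $\mathsf{Query}$/$\mathsf{Modify}$ operations on the quantum RMQ data structure of \cref{thm:QDynamicRMQ}. The key observation is that finding the $k$ smallest elements of $a\substr{1}{n}$ can be done greedily: repeatedly find the current global minimum and then ``remove'' it. Concretely, I would initialize the data structure $\mathsf{QDynamicRMQ}$ on the input array with a total operation budget of $q = \Theta(k)$ (each round uses one query plus one modify, so $2k$ operations in all, which is $O(n)$ when $k \le n/2$). Then in round $j = 1, \dots, k$, I would call $\mathsf{Query}(1, n)$ to obtain the index $i_j = \operatorname{RMQ}(a, 1, n)$ of the current minimum over the whole array, record $i_j$ as the $j$-th smallest index, and then call $\mathsf{Modify}(i_j, i_j, \mathsf{assign}_{+\infty})$ to set $a[i_j] \gets +\infty$ so that it will never again be returned as the minimum. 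After $k$ rounds the collected indices $i_1, \dots, i_k$ are exactly (indices of) the $k$ smallest elements.

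**For the complexity**, I would invoke \cref{thm:QDynamicRMQ} directly with the parameter $q = 2k = O(n)$. Taking the success probability $1 - \varepsilon$ per the theorem, the total query complexity over all $2k$ operations is
\[
O\rbra*{\sqrt{nq\log\rbra{q\log\rbra{n}/\varepsilon}}} = O\rbra*{\sqrt{nk\log\rbra{k\log\rbra{n}/\varepsilon}}},
\]
absorbing constant factors inside the logarithm and the square root, and likewise the time complexity is this quantity times $O(\log n)$. The space bound of $O(\log n)$ qubits follows because the quantum minimum-finding subroutine of \cref{lemma:findmin} only manipulates index/value registers of $O(\log n)$ qubits at a time, and the segment-tree bookkeeping (node attributes, lazy tags, the counter $k$) is all classical data stored in the word RAM; crucially, no QRAM is used anywhere, inheriting this feature from \cref{thm:QDynamicRMQ}.

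**The main subtlety** I would need to handle carefully is the error budget and the tie-breaking semantics, rather than the complexity accounting. Since the overall success probability $1-\varepsilon$ promised by \cref{thm:QDynamicRMQ} already accounts for \emph{all} operations performed on the data structure throughout its lifetime (the per-call error $\varepsilon/(4q\log_2 n)$ is chosen precisely so the union bound over every $\mathsf{create}$ call gives total error $\le \varepsilon$), I do not need to pay any additional union bound across the $k$ rounds: a single correct execution of the whole data-structure lifetime already guarantees all $k$ queries return correct answers simultaneously. I would verify that the assignment function $\mathsf{assign}_{+\infty}(x) = +\infty$ indeed lies in a valid monoid $F$ satisfying $f(\min(x,y)) = \min(f(x), f(y))$ — which holds trivially since both sides equal $+\infty$ — so that $\mathsf{Modify}$ is applicable. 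Finally, I would note that the convention of returning the smallest index in case of ties (built into $\operatorname{RMQ}$) ensures the greedy extraction is well-defined and that repeated minima are handled correctly, completing the reduction.
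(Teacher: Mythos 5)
Your proposal is correct and matches the paper's own proof essentially step for step: the same greedy reduction (repeatedly call $\mathsf{Query}(1,n)$ and then $\mathsf{Modify}(i_j, i_j, x \mapsto +\infty)$, exactly as in the paper's $\mathsf{QFindMin}$ algorithm), the same parameter choice $q = 2k$ plugged into \cref{thm:QDynamicRMQ}, and the same observation that the data structure's lifetime guarantee of success probability $1-\varepsilon$ already covers all $2k$ operations without a further union bound. Your added remarks on the validity of $\mathsf{assign}_{+\infty}$ in the monoid $F$ and on tie-breaking are correct details that the paper leaves implicit.
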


    \begin{algorithm}[!htp]
        \caption{$\mathsf{QFindMin}\rbra{\mathcal{O}_a, k, \varepsilon}$}
        \label{algo:qkmin}
        \begin{algorithmic}[1]
        \State $\mathcal{T} \gets \mathsf{QDynamicRMQ}.\mathsf{Initialize}\rbra{\mathcal{O}_a, 2k, \varepsilon}$;
        \For{$j = 1 \to k$}
            \State $i_j \gets \mathcal{T}.\mathsf{Query}\rbra{1, n}$;
            \State $\mathcal{T}.\mathsf{Modify}\rbra{i_j, i_j, f \colon x \mapsto +\infty}$;
        \EndFor
        \State \Return $\rbra{i_1, i_2, \dots, i_k}$. 
        \end{algorithmic}
    \end{algorithm}
    
    \begin{proof}
        We first explain the idea of finding (the indices of) the $k$ minimum elements of the input array $a\substr{1}{n}$ with a data structure for RMQ. 
        For simplicity, we first assume that every operation on the data structure succeeds with certainty. 
        With the RMQ data structure, we can find the $k$ minimum elements by repeating the following instructions using the $\mathsf{Query}$ and $\mathsf{Modify}$ operations:
        \begin{enumerate}
            \item Initialize the RMQ data structure $\mathcal{T}$ with the array $a\substr{1}{n}$.  
            \item For each $j = 1, 2, \dots, k$,
        \begin{enumerate}
            \item[2.1.] Let $i_j$ be the index of the minimum element, which is obtained by calling $\mathcal{T}.\mathsf{Query}\rbra{1, n}$. 
            \item[2.2.] Set $a\sbra{i_j} \gets +\infty$, which is done by calling $\mathcal{T}.\mathsf{Modify}\rbra{i_j, i_j, f \colon x \mapsto +\infty}$. 
        \end{enumerate}
        \end{enumerate}
        It can be seen by induction that after the first $\ell$ loops, we can find the indices of the $\ell$ smallest elements $i_1, i_2, \dots, i_\ell$ in the array $a\substr{1}{n}$ and then set $a\sbra{i_1}, a\sbra{i_2}, \dots, a\sbra{i_\ell}$ to $+\infty$ (as if they were removed), which guarantees that at the beginning of the $\rbra{\ell+1}$-th loop, the minimum element of the current array holds the $\rbra{\ell+1}$-th smallest element of the original array.

        We extend this idea to the case that operations on the data structure could fail with bounded-error probability. 
        The detailed implementation is given in \cref{algo:qkmin}.

    \textit{Complexity}. The complexity is obtained from \cref{thm:QDynamicRMQ} by setting $q = 2k$. 

    \textit{Correctness}. The parameters in the initialization $\mathsf{QDynamicRMQ}.\mathsf{Initialize}\rbra{\mathcal{O}_a, 2k, \varepsilon}$ guarantee the correctness. 
    \end{proof}
    
    \subsection{Quantum lower bound} \label{sec:lower-bound-dynamic-rmq}

    To conclude this section, we give a matching (up to logarithmic factors) quantum lower bound for RMQ. 

    \begin{theorem} [Lower bound for quantum RMQ] \label{thm:QDynamicRMQ-lower-bound}
        Any quantum algorithm for RMQ that supports $q$ operations on an array of length $n$ has query complexity $\Omega\rbra{\sqrt{nq}}$. 
    \end{theorem}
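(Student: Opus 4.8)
The plan is to prove the lower bound by a query-preserving reduction from $k$-minimum finding, for which the bound $\Omega\rbra{\sqrt{nk}}$ is already known \cite{DHHM06}. Let $\mathcal{A}$ be any bounded-error quantum algorithm that solves RMQ (i.e.\ supports $\mathsf{Initialize}$, $\mathsf{Query}$, $\mathsf{Modify}$) for $q$ operations on an array of length $n$, and write $Q$ for its query complexity. Since I am proving a lower bound, it suffices to exhibit a single admissible modification family $F$ for which RMQ is hard; I would take $F$ to contain the constant assignment $f\colon x \mapsto +\infty$ over $A = \mathbb{N} \cup \cbra{+\infty}$. This $f$ is admissible because $f\rbra{\min\rbra{x,y}} = +\infty = \min\rbra{f\rbra{x}, f\rbra{y}}$, so the hard instances I construct genuinely live inside the RMQ problem as defined.

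Given an instance of $k$-minimum finding on $a\substr{1}{n}$, I set $k = \floor{q/2}$ and run exactly the loop of \cref{algo:qkmin}: initialize $\mathcal{A}$ with $a$, and for $j = 1, \dots, k$ call $\mathsf{Query}\rbra{1, n}$ to obtain the index $i_j$ of the current global minimum, record $i_j$, then call $\mathsf{Modify}\rbra{i_j, i_j, f\colon x \mapsto +\infty}$ to delete that element. The correctness is the same induction used for \cref{corollary:qkmin}: conditioned on all operations succeeding, after $\ell$ rounds the values $a\sbra{i_1}, \dots, a\sbra{i_\ell}$ have been set to $+\infty$, so the next $\mathsf{Query}$ returns the $\rbra{\ell+1}$-th smallest index (ties broken by smallest index). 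Hence $i_1, \dots, i_k$ are the $k$ smallest. This consumes exactly $2k \le q$ RMQ operations and issues no oracle queries beyond those already made inside $\mathcal{A}$, so the resulting $k$-minimum-finding algorithm has query complexity at most $Q$.

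It remains to invoke the external lower bound. Interpreting ``$\mathcal{A}$ supports $q$ operations'' as having overall success probability bounded below by a constant (say $\ge 2/3$ across the whole sequence), the chained reduction solves $k$-minimum finding with the same constant success probability. The $\Omega\rbra{\sqrt{nk}}$ lower bound for quantum $k$-minimum finding \cite{DHHM06} then forces
\[
Q = \Omega\rbra*{\sqrt{nk}} = \Omega\rbra*{\sqrt{nq}},
\]
using $k = \floor{q/2} = \Theta\rbra{q}$, which is the claimed bound. Equivalently, any RMQ algorithm with query complexity $o\rbra{\sqrt{nq}}$ would yield a $k$-minimum-finding algorithm with $o\rbra{\sqrt{nk}}$ queries, a contradiction.

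The step I expect to require the most care is the bookkeeping of success probability: the $k$-minimum lower bound is a bounded-error statement, so I must confirm that the notion of ``supports $q$ operations'' is taken to be overall (rather than per-operation) constant success, which is exactly what the reduction propagates, and verify that no amplification of the per-operation error is needed — thus the query count is preserved up to a constant factor. The only other point to state cleanly, though not a genuine obstacle, is that fixing the single hard family $F$ containing assignment to $+\infty$ merely strengthens the lower bound, so the bound holds for RMQ as defined.
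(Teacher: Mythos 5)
Your proposal is correct and follows essentially the same route as the paper: reduce $k$-minimum finding to RMQ by alternating $\mathsf{Query}\rbra{1,n}$ with $\mathsf{Modify}$ setting the found element to $+\infty$ (exactly the construction of \cref{corollary:qkmin}), then invoke the $\Omega\rbra{\sqrt{nk}}$ lower bound of \cite{DHHM06} with $k = \Theta\rbra{q}$. Your extra care about the admissibility of the assignment family $F$ and the overall-versus-per-operation success probability is sound bookkeeping that the paper leaves implicit, but it does not change the argument.
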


    \begin{proof}
        The lower bound is based on the reduction from $k$-minimum finding to RMQ in \cref{corollary:qkmin}.
        The construction in \cref{corollary:qkmin} means that any quantum RMQ with query complexity $T\rbra{n, q}$ can be used to solve quantum $k$-minimum finding with query complexity $T\rbra{n, 2k}$. 
        On the other hand, it is known in \cite[Theorem 8.1]{DHHM06} that quantum $k$-minimum finding requires at least $\Omega\rbra{\sqrt{nk}}$ queries, i.e., $T\rbra{n, 2k} = \Omega\rbra{\sqrt{nk}}$. 
        Therefore, due to the arbitrariness of $k$, we conclude that $T\rbra{n, q} = \Omega\rbra{\sqrt{nq}}$. 
    \end{proof}

\section*{Acknowledgements}

The authors would like to thank the anonymous reviewers for noting another approach for quantum $k$-minimum finding (mentioned in \cref{footnote:another-qkmin}) and for raising a possible approach to Question~\ref{ques:1}.

The work of Qisheng Wang was supported in part by the Engineering and Physical Sciences Research Council under Grant \mbox{EP/X026167/1}.
The work of Zhicheng Zhang was supported in part by the Sydney Quantum Academy, NSW, Australia and in part by
the Australian Research Council (Grant Number: DP250102952).

\bibliographystyle{unsrturl}
\bibliography{main}

\end{document}